\newtheorem{theorem}{Theorem}
\newtheorem{corollary}{Corollary}
\newtheorem{define}{Definition}
\renewcommand{\P}{\mathcal{P}}
\title{Robust Allocations with Diversity Constraints\footnote{This work is supported by NSF awards CCF-1637397, CNS-1943584, CNS-1956435, CNS-1916153, CCF-2113798, ONR award N00014-19-1-2268, DARPA award FA8650-18-C-7880, and gifts from Google and Microsoft.}}
\author{Zeyu Shen\thanks{Duke University, Durham NC 27708-0129. Email: {\tt zeyu.shen@duke.edu}} 
           \and Lodewijk Gelauff\thanks{Management Science and Engineering, Stanford University, Stanford CA 94305. Email: {\tt lodewijk@stanford.edu}} 
           \and Ashish Goel\thanks{Management Science and Engineering and (by courtesy) Computer Science Department, Stanford University, Stanford CA 94305. Email: {\tt ashishg@stanford.edu}}
	   \and Aleksandra Korolova\thanks{Department of Computer Science, University of Southern California. Email: {\tt korolova@usc.edu}}
           \and Kamesh Munagala\thanks{Department of Computer Science, Duke University, Durham NC 27708-0129. Email: {\tt kamesh@cs.duke.edu}}
}
\date{}
\begin{document}

\maketitle
\begin{abstract}
    We consider the problem of allocating divisible items among multiple agents, and consider the setting where any agent is allowed to introduce {\em diversity constraints} on the items they are allocated. We motivate this via settings where the items themselves correspond to user ad slots or task workers with attributes such as race and gender on which the principal seeks to achieve demographic parity. We consider the following question: When an agent expresses diversity constraints into an allocation rule, is the allocation of other agents hurt significantly? If this happens, the cost of introducing such constraints is disproportionately borne by agents who do not benefit from diversity. We codify this via two desiderata capturing {\em robustness}. These are {\em no negative externality} -- other agents are not hurt -- and {\em monotonicity} -- the agent enforcing the constraint does not see a large increase in value. We show in a formal sense that the Nash Welfare rule that maximizes product of agent values is {\em uniquely} positioned to be robust when diversity constraints are introduced, while almost all other natural allocation rules fail this criterion. We also show that the guarantees achieved by Nash Welfare are nearly optimal within a widely studied class of allocation rules. We finally perform an empirical simulation on real-world data that models ad allocations to show that this gap between Nash Welfare and other rules persists in the wild.
\end{abstract}

\section{Introduction}
Allocating heterogeneous divisible items (or resources) among agents with different  values over these items is a central problem in economics, with literature dating back many decades~\cite{Steinhaus,brams_taylor,Nash,varian,ArrowD}. These problems have gained importance in computer science and machine learning due to their applications in, among other things, large-scale online advertising~\cite{MSVV,Balsiero,GoelLMNP,balseiro2020regularized,Feldman1,NW_ads}, resource allocation in datacenters~\cite{drf} and sharing economy platforms. 

Given their wide applicability, there has been a large body of work on elucidating desirable properties of such allocations. One desirable property is {\em fairness} or {\em equity} in the value the allocation provides to different agents. Though there is no universally agreed-upon notion of fairness, one appealing notion~\cite{varian} defines fair allocations as those that are both Pareto-optimal on the value they provide to the agents  as well as {\em envy-free}~\cite{Steinhaus}, meaning that no agent should derive more value from another agent's allocation. A less restrictive notion is the so-called Pigou-Dalton principle~\cite{pigou,moulin} or {\em welfarism}, which states that given any fixed total value of the agents, an allocation should prefer distributing these values so that there is no potential transfer of value from an agent with larger value (the ``rich'') to one with smaller value (the ``poor''). This loosely translates to allocations that optimize a (weakly) concave social welfare function over agents' values.  We call rules that optimize separable, symmetric, concave functions of agent values as {\em welfarist} allocation rules.\footnote{We borrow this term from~\cite{PetersS}, though the concept itself is classic~\cite{kelly,MoW} (See Section~\ref{sec:welfare} for formal definitions.)}

Due to their simplicity of implementation and ease of understanding, most allocation rules implemented in practice are welfarist rules. Indeed, even when allocations are supported by prices, for instance in revenue-maximizing ad auctions, by Myerson's celebrated transformation~\cite{myerson,Elkind} and its generalizations, these can be viewed as optimizing a linear ``virtual welfare'' function over allocations~\cite{CaiDW,CelisV}. Further, envy-freeness can also be implemented by the welfarist {\em Nash Welfare} rule~\cite{Nash,EG,varian} that optimizes the product of agents' values, and which we will discuss extensively.

\subsection{Allocations with Diversity} 
In this paper, we consider a different desirable facet of such allocations -- diversity. The notion of fairness presented above attempts to spread value fairly across agents. Suppose, in addition, the items being allocated were also associated with individuals. For instance, consider the following scenario motivated by online advertising -- the items are ad view slots in an advertising ecosystem, each item labeled with the attributes, such as race, age, gender, of the viewing individual on a social media site. The agent represents an advertiser who has different preferences over the viewing individuals. The agent may not only seek an allocation of slots that maximizes its own value, but may also want these slots to not be all views by any particular race. In other words, the agent seeks an allocation that is also diverse on the attributes of the corresponding slots. Similarly, in a team formation platform, an agent seeking such a team may value diversity in attributes such as gender and race of the team members in addition to competence. Diversity has become an increasingly important consideration for eliminating bias in allocation platforms where the items map to humans. Allowing agents to specify some form of diversity constraints on allocations are one way for achieving this goal~\cite{Gkatzelis1,Dickerson,CelisV2,NguyenV}.

The simplest diversity constraint we consider is the case where the agent wants items with different attributes (such as race and gender) in fixed proportions, which models the diversity desiderata in the advertising and team formation settings considered above. We term these ``proportionality constraints''; see Section~\ref{sec:constraints} for a formal definition. More generally, such constraints capture {\em complementarity} in the allocation, where an agent can say ``I want at least as much of this attribute as that attribute.'' A different application for such constraints is in machine allocation in data centers, where a client could request machines in different geographic regions in certain proportions~\cite{maggspaper}. For simplicity, we call these constraints {\em diversity constraints}, though it should be kept in mind that they model complementarity in general and can easily arise in settings beyond achieving diversity.

In Section~\ref{sec:constraints}, we model diversity constraints imposed by agent $i$ as simply a convex polytope $\P_i$ such that the empty allocation $\vec{0} \in \P_i$. First, this means such constraints preserve feasibility of the allocation. More importantly, these constraints, including proportionality constraints considered above, can have both positive and negative coefficients, and this leads to interesting and non-obvious behavior of welfarist allocation rules. This forms the focus of this paper. 

 \paragraph{Externality in Diverse Allocations.} One natural solution to incorporating diversity is for the allocation platform to ask agents for their constraints and add them to the welfarist optimization problem. Indeed, recent work on large scale online allocation problems~\cite{Devanur,Devanur2,Feldman1} shows that even in the presence of {\em arbitrary} convex constraints on the allocations that can run across time, the resulting problem can be viewed as online stochastic convex programming and admits efficient approximation algorithms~\cite{AgarwalD,NW_ads}.  Similarly, recent works on auctions with diversity constraints~\cite{CelisV,ChawlaJ,gelauff2020} also follow this route and simply add these constraints to an optimization procedure. 

In this paper, our main focus is to understand how adding such constraints affects the {\em outcome} of the allocation. More precisely, for any welfarist allocation rule, when an agent imposes diversity constraints, this changes the allocation of not just this agent, but all other agents as well. Naturally, seeking a diverse allocation will incur a cost in terms of diminished total value of all the agents. This cost is, of course, counterbalanced by the benefit of diversity to the agent seeking diversity (which does not explicitly appear in our model). Since the agent seeking diversity is the one benefiting from the constraints, it would be desirable from both the other agents' and the platform's perspective that this agent should be the one bearing the majority of this cost, while the values for other agents should not be significantly impacted. In other words, there should be {\em little negative externality} on other agents. We present this desideratum formally in Section~\ref{sec:desiderata}. 

It is not {\em a priori} clear whether the welfarist allocation rules are robust in the above sense and thus the extent to which imposing diversity constraints, even by one agent, may impact other agents. In fact, this lack of clarity may be a major reason for a platform to hesitate to implement functionality enabling agents to specify such diversity constraints, a significant shortcoming for applications such as team selection and online advertising. Thus the focus of our work is on quantifying the potential negative externality and on developing recommendations for finding robust allocations.



\subsection{Our Results} 
To start with, one could reasonably hope that welfarist allocation rules that attempt to find equitable or fair allocations across agents would be naturally robust to the addition of diversity constraints, in the sense that to a good approximation, the negative externality to other agents' values is bounded.
{\em Our first analytical result shows that the above intuition is false.} In Section~\ref{sec:nne}, we show that {\em almost all} welfarist allocation rules -- regardless of how equitably or fairly they distribute value among agents -- are not robust in the sense that the negative externality on other agents that stems from imposing even a single diversity constraint is unbounded (see Theorem~\ref{thm:gen_lb}.) 
On the positive side, we show that the Nash Welfare objective~\cite{Nash,EG}, which optimizes the product of agents' values, is the exception in that it achieves negative externality that is bounded above by an absolute constant factor even when a constant number of agents impose arbitrary  diversity constraints.  Further, this constant bound is nearly optimal for any number of agents enforcing such constraints since any allocation rule satisfying the Pigou-Dalton principle also suffers similar negative externality. (Theorems~\ref{thm:pf1} and~\ref{thm:pf2.5}.) 

We then consider how an agent's imposing diversity constraints affects its own value. We define an allocation rule to be {\em monotone} if an agent's imposing diversity constraints weakly reduces her own value, and is monotone to a constant factor if the value increases by at most a constant factor. In Section~\ref{sec:mon}, we show such monotonicity is correlated with how egalitarian the allocation rule is -- the more egalitarian (or concave) a rule is, the closer to monotone it is (see Theorem~\ref{thm:gen_ub1}.) Further, the Nash Welfare allocation rule is monotone to an absolute constant factor.

In Section~\ref{sec:simulate}, we use real-world datasets to model ad allocation. For a budget-capped valuation function that is widely studied in this setting, we compare the negative externality induced by various allocation rules. In this setting, the social welfare maximizing rule models first price auctions with budget constrained advertisers, and we empirically show that it indeed suffers large externality and non-monotonicity. However, the Nash Welfare rule suffers very small externality and is monotone, hence making a case for that our theoretical insights will apply in the wild as well.

We summarize the upper and lower bounds on the amount of negative externality and non-monotonicity for various allocation rules in Section~\ref{sec:desiderata}. Taken together, our main contribution is therefore to show that the Nash Welfare objective is uniquely positioned among a large class of welfarist allocation rules to be robust, achieving both no negative externality and monotonicity to within small constant factors. Our results also suggest that care must be exercised by the allocation platform when allowing agents to express diversity or complementarity constraints. If these are directly encoded into the optimization like in~\cite{CelisV,Devanur2,Feldman1,AgarwalD,balseiro2020regularized}, they may create second-order unfairness where the resulting decease in value is borne by agents  not directly benefiting from the diversity.

\subsection{Related Work} 
\paragraph{Strategy-proofness.} Our work studies robustness of allocations when users can express diversity constraints. This can be equivalently viewed as an agent specifying a different utility function. For instance, a proportionality constraint corresponds to a user expressing a {\em Leontief} utility function over the allocation instead of linear utility. In that sense, our results can be equivalently viewed as studying the robustness to agents reporting utility functions that capture diversity. Robustness in our sense has been widely studied when agents can arbitrarily misreport utility functions. In this context, monotonicity is simply strategy-proofness, and non-negative externality maps to non-bossiness~\cite{SaSo}. The main result in~\cite{SaSo} shows that the only allocation rules that satisfy these properties are variants of serial dictatorships, where agents go one after the other and choose their optimal allocation, and these rules are incompatible with Pareto-optimality and welfarist rules. The only exception is settings resembling matchings~\cite{Papai,Hadi}, which are much more restrictive than the setting we study.

Note however that a diversity constraint is a very specific type of ``misreport" -- given any allocation, the new utility function is necessarily weakly smaller in value, and this is crucial to all our results. Monotonicity can now be viewed as strategy proofness with such misreports. But now, rules that are not strategyproof in general can easily be monotone. Indeed, we show in Theorem~\ref{thm:gen_ub1} that the MMF rule is actually monotone for general constraints, Pareto-optimal, and welfarist, but is trivially very far from being strategyproof if arbitrary misreports are allowed.  In the same vein, the non-negative externality property is much more specific than non-bossiness with general misreports, and it is a priori not obvious that is incompatible with welfarist rules (Theorem 3). One of our main results is to show that even with the specific type of utility ``misreports” that diversity constraints imply, this property is incompatible to any factor with welfarist rules, except for the NW rule. 

\paragraph{Gross Substitutes.} The Nash Welfare mechanism is classic approach to finding equitable allocations, and achieves fairness via the notion of {\em market clearing}~\cite{ArrowD,Fisher,Nash}. It was shown in~\cite{EG} that this objective finds the solution to the special case of the {\em Fisher market}~\cite{Fisher}. Given linear valuation functions of agents and equal initial budgets, this market computes equilibrium prices for the items, such that when each agent buys their value maximizing allocation subject to exhausting their budget, then each item with positive price is fully allocated.  This concept is also called {\em competitive equilibrium with equal incomes} (CEEI) and is widely studied as a fair allocation rule~\cite{HyllandZ,budish,varian}. The Fisher market with linear valuation functions satisfies the {\em Gross Substitutes} property~\cite{GulS}, which states that if the prices of some items increase, the demand for the other items cannot go down. This implies adding (resp. removing) an agent weakly reduces (resp. increases) the value obtained by the other agents. This property is also called competition monotonicity~\cite{JainV}. 

Though our results about the externality induced by Nash Welfare appear superficially similar to Gross Substitutes, we cannot find a formal connection. This is because our setting considers the same set of agents but adds constraints, and these constraints do not preserve the Gross Substitutes property. Furthermore, in contrast to the Gross Substitutes property that holds in a strict sense for linear valuation functions, in Section~\ref{sec:nne}, we present a lower bound showing any welfarist allocation rule must result in some negative externality. 

\paragraph{Fairness in Optimization.} Several recent works have considered the question of computing solutions to discrete optimization problems, such as $b$-matchings~\cite{Dickerson}, stable matching~\cite{NguyenV}, clustering~\cite{Sergei1}, ranking~\cite{Vishnoi2,KK}, voting~\cite{CelisV2} and packing~\cite{Sergei2}, when the items belong to groups that must be allocated fairly in the resulting solution. For instance, each cluster in the clustering solution must be assigned a balanced cohort of red and blue points~\cite{Sergei1}. Our work tackles a normative question that arises in this space -- assuming only a subset of agents care about diversity, what do these constraints mean for the allocations to other agents? Extending our work to discrete optimization problems is an interesting open question.

Our work considers the pure allocations problem with additional diversity constraints. In some settings such as  online advertising, it is also possible to consider allocations with prices and budgets. For instance, if the utility of an agent is its value minus the price it pays (called {\em quasi-linear}), then the dual of the social welfare maximizing allocation also yields an equilibrium~\cite{ArrowD,GulS}.  Similarly, the online algorithms literature~\cite{MSVV,Devanur,Devanur2,AgarwalD,balseiro2020regularized} considers the model where the total value derived by an agent is constrained by its budget, while auction literature~\cite{myerson,CaiDW,CelisV,ChawlaJ,clinch} finds allocations and prices where agents do not have incentive to misreport. As we discuss later, some of these mechanisms suffer from the same drawbacks as welfarist allocation rules, we leave a deeper examination of rules with prices and budgets (from the perspective of robustness) as an interesting open question.


\section{The Diverse Allocation Model}
There are $n$ agents and $m$ divisible items, each with unit supply.\footnote{The model trivially generalizes to the setting with arbitrary but known supply by scaling the values.} 
The goal of the platform is to compute an allocation $\vec{x}$, where $x_{ij}$ represents the fraction of item $j$ that is assigned to agent $i$. Each item is assigned at most once fractionally, that is, $\sum_i x_{ij} \le 1 \ \ \forall \mbox{ items } j$.

In the context of ad auctions, an item represents a type of viewer or keyword, available in a certain quantity. A fractional allocation would allocate the corresponding fraction of the ad slots of that type to the advertiser (or agent).

We assume that any agent $i$'s value for an allocation $\vec{x_i}$ is linear. We denote the value as $V_i(\vec{x_i}) = \sum_j v_{ij} x_{ij}$, where $v_{ij}$ is the value of the agent for item $j$. Our positive results extend to the more general setting where $V_i$ are arbitrary non-decreasing, continuous, concave functions, while our negative results apply even to the linear case.

\subsection{Welfarist Allocation Rules} 
\label{sec:welfare}
We will consider a general subclass of allocation rules called {\em welfarist} allocation rules. These optimize $\sum_i f(V_i(\vec{x_i}))$, where $f$ is a monotonically non-decreasing, continuous and concave function. 

Formally, given a choice of $f$, the welfarist optimization problem is as follows:
\[ \mbox{Max} \sum_i f(V_i), \qquad \mbox{s.t.} \] 
\[ \vec{V_i} \in \left\{V_i = \sum_j v_{ij} x_{ij}\  \forall i; \ \ \sum_i x_{ij} \le 1 \ \forall j; \ \  x_{ij} \ge 0 \ \forall i,j \right\}\]

Typically, the more concave $f$ is, the more egalitarian across agents is the resulting allocation. The main class of rules we consider are {\bf $\gamma$-fair} rules~\cite{MoW}, defined as:
$$ f(y) = \frac{y^{\gamma}}{\gamma}, \qquad \gamma \in (-\infty, 1]$$
At $\gamma = 0$, this rule is defined by its limit as $\gamma \rightarrow 0$, so that $f(y) = \ln y$.

As special cases, this yields well-known rules in increasing order of how concave $f$ is, in a sense capturing increasing fairness in the resulting allocations.

\noindent {\bf   Social Welfare (SW).} $f(y) = y$, corresponding to $\gamma = 1$. \\
\noindent { \bf Nash Welfare (NW).} As $\gamma \rightarrow 0$, we obtain $f(y) = \ln y$. \\
\noindent {\bf  Max-min Fairness (MMF).} This computes an allocation with $\vec{V}$ such that if $V_i < V_j$, there is no other allocation where agent $i$'s value is larger than $V_i$ and $j$'s value is smaller than $V_j$. This corresponds to $\gamma \rightarrow -\infty$.

Among these functions, SW is the least egalitarian, while MMF is the most.  We now note some features of welfarist allocation rules that we will use later.

\begin{description}
\item[Pareto-optimality (PO).] The optimal allocation $\vec{x}$ is such that there is no other allocation $\vec{y}$ such that $V_i(\vec{y_i}) \ge V_i(\vec{x_i})$ for all agents $i$, with one inequality strict. 

\item[Pigou-Dalton (PD) Principle~\cite{pigou,moulin}.] Consider any two allocations with values $\vec{V}$ and $\vec{V'}$ where $\sum_i V_i = \sum_i V'_i$. Fix two agents $k$ and $\ell$ and suppose $V_j = V'_j$ for all $j \neq k, \ell$, while $V_k < V'_k < V_{\ell}$ and $V_{k} < V'_{\ell} < V_{\ell}$. Then the allocation rule weakly prefers $\vec{V'}$ to $\vec{V}$. 

\item[Anonymity.] We finally say that a rule is \emph{anonymous} if the allocation only depends on the revealed vector of values, and not on the identity of the agents. In particular, agents that reveal identical values receive identical allocations.
\end{description}

Note that all welfarist rules described above satisfy (PO), (PD), and anonymity. The NW rule has additional desirable properties. First, it is {\em scale invariant}: If an agent $i$ scales all her values $\vec{v_i}$ by any factor $\alpha$ with other agents remaining the same, the optimal allocation does not change.  Next, the allocation is {\em proportional}: Given $n$ agents, the value obtained by any agent in the NW allocation is always at least $1/n$ of the value it could have obtained had it been the only agent in the system.

Our proofs will extensively use the following property of the optimal allocation under welfarist rule $f$. Let $\vec{V^*}$ denote the values of the agents in the optimal solution, and let $\vec{V}$ be the values in any other feasible allocation. Since the valuation function is concave and continuous and so is $f$, the space of feasible $\vec{V}$ is convex. This implies the following gradient optimality property:
\begin{equation}
    \label{eq:opt0} \nabla f(\vec{V^*}) \cdot (\vec{V} - \vec{V^*}) \le 0
\end{equation}


\subsection{Specifying Diversity} 
\label{sec:constraints}
We assume an agent $i$ can be interested in receiving a diverse mix of items, and can specify diversity as a {\em constraint set} $\P_i$ on the allocation $\vec{x_i}$ it receives. This constraint set is added to the welfarist optimization problem described above.  We assume $\P_i$ is  convex and that $\vec{0} \in \P_i$. As mentioned before, such constraints are motivated by settings where the items have attributes related to gender, race, income, etc., and an agent can be interested in obtaining a balanced mix of items, as opposed to a welfare maximizing set of items. We consider two types of constraints:

\paragraph{Proportionality Constraint.} Here, agent $i$ specifies a partition $S_{i1}, S_{i2}, \ldots, S_{ik_i}$ of a subset $T$ of the items, along with proportions $\alpha_{i1}, \alpha_{i2}, \ldots, \alpha_{ik_i}$ where $\sum_{r=1}^{k_i} \alpha_{ir} \le 1$. It seeks allocations from each $S_{ir}$ in proportion to $\alpha_{ir}$. In other words, the constraint set $\P_i$ is:
$$ \sum_{j \in S_{ir}} x_{ij} = \alpha_{ir} \sum_j x_{ij} \qquad \forall r \in \{1,2,\ldots k_i\}$$
We can also define an $\epsilon$-approximate notion of proportionality, where the right hand side of the above equality is constrained to be within $(1 \pm \epsilon)$ factor of the left hand side. In our constructions, we will only consider exact proportionality.

\paragraph{General Constraint.} There is an arbitrary convex set $\P_i$, with $\vec{0} \in \P_i$. The agent needs $\vec{x_i} \in \P_i$. This is more general than the proportionality setting, and could for instance capture multiple proportionality constraints over different partitions of the items. Further, the unconstrained setting where $\P_i$ is all possible allocations is also a special case.

We call an agent who does not specify any constraints an {\em unconstrained} agent. A constrained agent can specify a general constraint set, and this includes being unconstrained.  Our positive results (Theorems~\ref{thm:pf1} and~\ref{thm:gen_ub1}) hold in the most general setting with any number of agents who are initially constrained, while the specific agent switches from being unconstrained to expressing a general constraint. Note that this captures as a special case, the setting mentioned before where all agents are initially unconstrained, and one agent switches to expressing a general constraint. Our impossibility results (Theorems~\ref{thm:gen_lb},~\ref{thm:pf2.5} and~\ref{thm:gen_ub1}, and Corollaries~\ref{cor1} and~\ref{cor2}) on the other hand hold even in the special case where there are two  agents who are unconstrained and one agent switches to expressing a single proportionality constraint.


\subsection{Robustness Desiderata} 
\label{sec:desiderata}
The high level question we consider is: For welfarist allocation rules, does providing the ability for an agent to express diversity constraints lead to undesirable externality in the resulting allocations?  We study two kinds of externalities: Whether this additional functionality can hurt the agents who do not use this functionality; and whether an agent can gain by misrepresenting themselves as diversity constrained when they are unconstrained in reality.  Formally, we consider two natural desiderata for the allocation rule.


\begin{description}
\item[Non-negative Externality.] ({\sc nne}) Suppose $\vec{x}$ is the allocation when agent $i$ is unconstrained. Suppose the agent expresses a diversity constraint $\P_i$ and the new allocation with $\P_i$ included in the optimization is $\vec{y}$. Then for any $\ell \neq i$, we should have $V_{\ell}(\vec{y_{\ell}}) \ge V_{\ell}(\vec{x_{\ell}})$. In other words, if agent $i$ expresses a diversity constraint, it should not create negative externality to other agents. For a parameter $q \in [0,1]$, we say that an allocation rule is $q$-{\sc nne} if $V_{\ell}(\vec{y_{\ell}}) \ge q V_{\ell}(\vec{x_{\ell}})$ for all $\ell \neq i$. For some of our results, this can be naturally generalized to $k$ agents are initially unconstrained and simultaneously express arbitrary diversity constraints. 

\item[Monotonicity.] ({\sc mon}) Consider the same setting as above. We should have $V_i(\vec{x_i}) \ge V_i(\vec{y_i})$. In other words, expressing a diversity constraint should not increase the linear value of the agent, since such an increase has the undesirable effect of an agent having the incentive to misreport their constraints. For a parameter $p \in [0,1]$, we say that an allocation rule is $p$-{\sc mon} if $V_i(\vec{x_i}) \ge p V_i(\vec{y_i})$.
\end{description}

An allocation rule is $(p,q)$-robust if it is  $p$-{\sc mon}, and $q$-{\sc nne}, where $p,q \in (0,1)$ are constants.  As discussed before, robustness is desirable for practical implementation of diversity constraints.  

\paragraph{Summary of Results.} In the sequel, we will bound the $(p,q)$-robustness achievable for various welfarist allocation rules. In particular, we show that among the class of $\gamma$-Fair rules, only the Nash Welfare rule achieves constant $q \ge \frac{1}{4}$, that is, has bounded negative externality. All other rules have $q = 0$. Further, the bound attained by NW cannot be significantly improved: No rule satisfying (PO) and (PD) as defined in Section~\ref{sec:welfare} can achieve $q > \frac{1}{2}$.  

For monotonicity, we show that $p$ increases with decreasing $\gamma$, that is, as the function $f$ becomes more concave or fair. At one extreme, for MMF, we have $p = 1$, and at the other, for SW, we have $p = 0$. The Nash Welfare objective achieves $p = \frac{1}{2}$, and the bounds we obtain are tight for all $\gamma \in (-\infty, 1]$.



\section{The Non-negative Externality ({\sc nne}) Condition}
\label{sec:nne}
In this part, we will assume $f$ is differentiable and study which functions $f$ satisfy the $q$-{\sc nne} condition for some absolute constant $q > 0$. At first glance, one might intuit that the more concave or fair $f$ is, the more likely it is to satisfy $q$-{\sc nne}. Surprisingly, we show this intuition is false. Our main result in this section is that $f$ needs to have a very specific form that is unrelated to its fairness or concavity in order for the allocation rule to be $q$-{\sc nne}. In particular, we show that most functions that are not scale-invariant fail the {\sc nne} property regardless of how concave they are, while the scale-invariant NW allocation satisfies $q$-{\sc nne} for an absolute constant $q$.

\subsection{An Impossibility Result}
Define $g(y) = y f'(y)$. We will assume $g$ is continuous. We need the following technical definition.

\begin{define}
For $\delta \le 1$, we say that $f$ is $\delta$-scaled if $g$ is continuous and $\min_{x,y \ge 0} \frac{g(y)}{g(x)} = \delta$. 
\end{define}

We show an impossibility result for $q$-{\sc nne} in the following theorem.

\begin{theorem}
\label{thm:gen_lb}
If $f$ is $\delta$-scaled, then it is not $q$-{\sc nne} for any $q > \delta$.
\end{theorem}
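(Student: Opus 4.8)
The plan is to prove the statement directly by construction: for every $q > \delta$ I will exhibit a single instance, using only two unconstrained agents and one agent that switches to a \emph{single} proportionality constraint, in which the victim's value collapses by more than a factor $q$. The guiding idea is that a proportionality constraint can tie together a \emph{private} item that the constraining agent alone values and a \emph{contested} item that only the victim values; wanting the private item then forces the constraining agent to grab the contested item purely as a side effect, and the size of this externality turns out to be governed entirely by the range of $g(y)=yf'(y)$.

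Concretely, I would take agents $1$ (the victim), $2$ (an isolated unconstrained bystander with its own private item, present only so that two agents are unconstrained), and $3$ (the constraining agent). There are two relevant items: item $a$, valued only by agent $3$ at rate $\sigma$, and item $b$, valued only by agent $1$ at rate $\mu$ (agent $3$ values $b$ at $0$). Agent $3$'s constraint is the single proportionality constraint $x_{3a}=x_{3b}$, which indeed contains $\vec{0}$. First I would observe that in the unconstrained optimum all of $b$ goes to agent $1$ (assigning $b$ to agent $3$ yields it no value while depriving agent $1$), so $V_1(\vec{x})=\mu$ independently of $\mu,\sigma$. After the constraint is imposed, agent $3$ must take equal fractions $x_{3a}=x_{3b}=w$, so $V_3=\sigma w$ and $V_1(\vec{y})=\mu(1-w)$, where $w$ is chosen by the rule to balance agent $3$'s gain on $a$ against agent $1$'s loss on $b$.

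The computational heart is to read off $w$ from the gradient-optimality condition~\eqref{eq:opt0}, which at this interior point reads $\sigma f'(V_3)=\mu f'(V_1)$. Multiplying through by the respective shares converts $f'$ into $g$: writing $\sigma=V_3/w$ and $\mu=V_1/(1-w)$ gives $g(V_3)/w = g(V_1)/(1-w)$, hence $w/(1-w)=g(V_3)/g(V_1)$, and therefore the externality ratio is
\[ \frac{V_1(\vec{y})}{V_1(\vec{x})} = 1-w = \frac{g(V_1)}{g(V_1)+g(V_3)} = \frac{1}{1+g(V_3)/g(V_1)}. \]
Since $\mu$ and $\sigma$ are free parameters, the post-constraint values $V_1$ and $V_3$ can be set to any positive reals (pick the target $w\in(0,1)$ from the displayed relation, then back out $\mu,\sigma$). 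I would thus choose $V_1$ near a point minimizing $g$ and $V_3$ near a point maximizing $g$, pushing $g(V_3)/g(V_1)$ up to $\max g/\min g = 1/\delta$. This drives the ratio below $\delta$ (in fact toward $\delta/(1+\delta)$), so for any $q>\delta$ some instance in the family witnesses $V_1(\vec{y})<q\,V_1(\vec{x})$, which is exactly the failure of $q$-{\sc nne}.

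The main obstacle is rigor around the optima rather than the algebra. I must verify that the claimed allocations really are the maximizers: the ``before'' corner (all of $b$ to agent $1$) is optimal trivially because agent $3$ values $b$ at $0$, while for the ``after'' allocation I would note that the one-dimensional objective $f(\mu(1-w))+f(\sigma w)$ is concave in $w$, so the critical point I constructed, lying in $(0,1)$, is the global maximizer and the boundary marginals carry the correct signs automatically. Two edge cases need attention: when $f$ is merely affine (e.g.\ {\sc sw}, where $g(y)=y$ and $\delta=0$) the interior point degenerates and the optimum sits at the corner $w\to 1$, which only makes the ratio smaller and still yields $0=\delta$; and when the extrema defining $\delta$ are only approached in a limit, it suffices to pick values whose $g$-ratio is within an arbitrarily small tolerance of $1/\delta$, which exist by definition of $\delta$ as a minimum. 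Finally I would confirm that inserting the bystander agent $2$ and leaving surplus of item $a$ unallocated do not disturb the two-agent sub-optimization over $\{a,b\}$.
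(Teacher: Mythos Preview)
Your construction is essentially the paper's: two relevant agents and two items, where the constraining agent ties a privately valued item to an item only the victim values via a single proportionality constraint. The paper uses exactly this setup (with your agents $3$ and $1$ relabeled as its agents $1$ and $2$); your bystander agent $2$ stems from a misreading of the phrase ``two agents who are unconstrained and one agent switches'' and is unnecessary, though as you note it decouples completely and does no harm.

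Where you differ is in the analysis of the one-dimensional problem. The paper does not solve the first-order condition; instead it evaluates the derivative of $\ell(w)=f(\alpha w)+f(\beta(1-w))$ at the test point $w=1-\delta$, uses concavity of $f$ to bound $\alpha f'(\alpha(1-\delta))\ge g(\alpha)$, and then chooses $\alpha,\beta$ so that this derivative is positive, forcing the optimum past $1-\delta$. Your route---reading the interior optimum directly from the FOC and rewriting it as $1-w=g(V_1)/(g(V_1)+g(V_3))$---is cleaner and in fact yields the sharper threshold $\delta/(1+\delta)$ rather than the $\delta$ stated in the theorem; this matches the $q=\tfrac12$ barrier of Theorem~\ref{thm:pf2.5} when specialized to Nash Welfare ($\delta=1$). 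Both analyses rest on the same construction and the same observation that the externality is governed by the range of $g$.
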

\begin{proof}
Consider agents $1$ and $2$, and items $a$ and $b$. Set $v_{1a} = \alpha$, $v_{2b} = \beta$, and $v_{1b} = v_{2a} = 0$. (We can set the latter values to be any small $\epsilon > 0$ as well.) In the absence of diversity constraint, we clearly have $x_{1a} = x_{2b} = 1$. So $V_2 = \beta$.

Now, agent $1$ expresses a diversity constraint and requires $x_{1a} = x_{1b}$. Set $x_{1a} = x_{1b} = x$. Then, we optimize:
$$\max \ell(x) = f(\alpha x) + f(\beta (1 - x)), \:\:x \in [0, 1].$$
We will show that $\ell'(1 - \delta) > 0$. Since $\ell$ is concave, this implies that the optimal solution $\tilde{x} > 1 - \delta$, which in turn implies that $V_2 < \delta \beta$, which thus shows that the allocation rule is not $q$-{\sc nne} for $q = \delta$.

To show this, we observe:
$$\ell'(1 - \delta) = \alpha f'(\alpha (1 - \delta)) - \beta f'(\delta \beta) \ge g(\alpha) - \frac{g(\beta\delta)}{\delta},$$
where the final inequality follows since $f'(\alpha (1-\delta)) < f'(\alpha)$ by the concavity of $f$.
We now set $\alpha = \text{argmax}_y g(y)$ and $\beta = \frac{y_\delta}{\delta}$, where $g(y_{\delta}) = \delta \max_y g(y)$. This yields $\ell'(1 - \delta) > 0$, completing the proof.
\end{proof}

The above theorem rules out achieving $q$-{\sc nne} for any constant $q$ for a wide range functions. In particular, we have:

\begin{corollary}
\label{cor1}
The $\gamma$-Fairness rule for any fixed $\gamma \neq 0$ does not satisfy $q$-{\sc nne} for any constant $q > 0$.
\end{corollary}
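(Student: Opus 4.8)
The plan is to read the corollary off Theorem~\ref{thm:gen_lb} by computing the scaling quantity for the $\gamma$-Fairness rule and observing that it degenerates to $0$. For $f(y) = y^{\gamma}/\gamma$ we have $f'(y) = y^{\gamma-1}$, so the relevant function is
\[ g(y) = y f'(y) = y^{\gamma}. \]
I would then note that the ratio $g(y)/g(x) = (y/x)^{\gamma}$ has infimum $0$ over $x,y \ge 0$ whenever $\gamma \neq 0$: for $\gamma \in (0,1]$ this is witnessed by letting $y \to 0$ with $x$ fixed, and for $\gamma < 0$ by letting $y \to \infty$ with $x$ fixed. Hence the $\gamma$-Fairness rule is ``$\delta$-scaled'' only with $\delta = 0$, and the conclusion of Theorem~\ref{thm:gen_lb} is precisely that it fails $q$-{\sc nne} for every $q > 0$.

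The one subtlety --- and the step I expect to need the most care --- is that the proof of Theorem~\ref{thm:gen_lb} selects $\alpha = \text{argmax}_y\, g(y)$ and $\beta = y_{\delta}/\delta$, both of which are ill-defined when $\delta = 0$ (the supremum of $g$ is not attained, and one cannot divide by $\delta$). I would therefore not plug $\delta = 0$ into the theorem verbatim, but instead re-run its construction with the target externality $q$ playing the role that $\delta$ played there. Concretely, fix an arbitrary $q \in (0,1)$ and reuse the two-agent, two-item instance ($v_{1a} = \alpha$, $v_{2b} = \beta$, $v_{1b} = v_{2a} = 0$), where unconstrained we have $V_2 = \beta$. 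After agent $1$ imposes $x_{1a} = x_{1b} = x$ we optimize $\ell(x) = f(\alpha x) + f(\beta(1-x))$, and it suffices to show $\ell'(1-q) > 0$, since concavity of $\ell$ then forces the optimum $\tilde{x} > 1-q$ and hence $V_2 = \beta(1-\tilde{x}) < q\beta$.

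The decisive computation is
\[ \ell'(1-q) = \alpha^{\gamma}(1-q)^{\gamma-1} - \beta^{\gamma} q^{\gamma-1}, \]
so that $\ell'(1-q) > 0 \iff (\alpha/\beta)^{\gamma} > (q/(1-q))^{\gamma-1}$. For $0 < \gamma \le 1$ the map $t \mapsto t^{\gamma}$ is increasing, so this holds once $\alpha/\beta$ exceeds a finite threshold depending on $q$ and $\gamma$; for $\gamma < 0$ it is decreasing, so the inequality holds once $\alpha/\beta$ lies below a finite threshold. Either way, for every $q$ there is a choice of $\alpha,\beta$ driving $V_2$ below $q\beta$, which establishes the failure of $q$-{\sc nne}. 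As a final bookkeeping point I would, for $\gamma < 0$ (where $f(0) = -\infty$), replace the zero cross-values by a small $\epsilon > 0$ exactly as permitted in the theorem's proof, and send $\epsilon \to 0$ so that the unconstrained value $V_2 = \beta$ and the bound $V_2 < q\beta$ are preserved in the limit.
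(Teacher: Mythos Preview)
Your proposal is correct and follows the same approach as the paper: compute $g(y)=y^{\gamma}$, observe that its range forces $\delta=0$, and invoke Theorem~\ref{thm:gen_lb}. The paper's proof simply notes that $g$ is monotone and unbounded for $\gamma\neq 0$, declares the rule ``not $\delta$-scaled for any $\delta>0$,'' and concludes; you go further by flagging that the construction inside Theorem~\ref{thm:gen_lb} (choosing $\alpha=\mathrm{argmax}\,g$ and dividing by $\delta$) is ill-defined at $\delta=0$, and you patch this by re-running the two-agent instance for each fixed $q>0$ with an explicit choice of $\alpha/\beta$. That extra care is legitimate and makes the argument more self-contained, though the paper evidently intends the same limiting idea. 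One minor remark: your $\epsilon$-perturbation for $\gamma<0$ is not strictly needed, since both the unconstrained optimum ($V_1=\alpha$, $V_2=\beta$) and the constrained optimum (which is interior because $f(0)=-\infty$ drives the maximizer away from the endpoints) are already well-defined with zero cross-values.
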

\begin{proof}
For $\gamma$-Fairness, note that $g(y) = x^{\gamma}$ and is monotonically increasing and unbounded if $\gamma \in (0,1]$ and is monotonicially decreasing and unbounded when $\gamma < 0$. Therefore, in either case, it is not $\delta$-scaled for any $\delta > 0$, which shows that there is no constant $q > 0$ for which the allocation is $q$-{\sc nne}.   
\end{proof}

This rules out $q$-{\sc NNE} for any constant $q > 0$ for SW and MMF. A similar corollary holds also for additional functions commonly considered in the literature. For the {\em Exponential} function $f(y) = 1 - e^{-\lambda y}$, note that $g(y) = \lambda y e^{-\lambda y}$, and $g(0) = 0$.  For the {\em Smooth Nash Welfare}~\cite{SNW,FainMS18} function $f(y) = \ln (1+y)$, we have $g(y) = \frac{y}{y+1}$, which is increasing with $g(0) = 0$.  Similarly, for $f(y) = \ln \ln (1+y)$, we have $g(y) = \frac{y}{y+1}\frac{1}{\log (y+1)}$ is decreasing with $g(y) \rightarrow 0$ as $y \rightarrow \infty$. Therefore, these functions are only $\delta$-scaled for $\delta \rightarrow 0$, and are hence not $q$-{\sc nne} for any  $q > 0$.

\subsection{Externality of Nash Welfare}

We now show quite surprisingly that in a sense, the converse of Theorem~\ref{thm:gen_lb} is also true. Specifically, this theorem does not rule out achieving $q$-{\sc NNE} for functions $f$ that are $\delta$-scaled for constant $\delta > 0$. 
We show that a subclass of these functions satisfy $q$-{\sc NNE} for constant $q > 0$, and this subclass includes the NW allocation rule. Further, we show that any allocation rule that is Pareto-optimal and satisfies the Pigou-Dalton principle incurs an almost matching negative externality.

\begin{theorem}
\label{thm:pf1}
Allocation rules where $g(y) = yf'(y)$ is non-increasing and $f$ is $\delta$-scaled satisfy $q$-{\sc nne} for $q = \frac{\delta}{\delta + 3}$. 
\end{theorem}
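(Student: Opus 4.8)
The plan is to compare the two optima through the gradient optimality condition~\eqref{eq:opt0}, applied at the \emph{constrained} optimum against an asymmetric feasible test allocation that zeroes out the constraining agent and restores the target agent. Write $\vec{V}$ and $\vec{W}$ for the value vectors of the optimal allocations $\vec{x}$ (agent $i$ unconstrained) and $\vec{y}$ (agent $i$ constrained by $\P_i$). Fix a target agent $\ell \neq i$; the goal is $W_\ell \ge \frac{\delta}{\delta+3} V_\ell$. I will aim to establish
\[
f'(W_\ell)\,(V_\ell - W_\ell) \;\le\; \tfrac{3}{\delta}\, g(W_\ell),
\]
since $g(W_\ell) = W_\ell f'(W_\ell)$, so dividing by $f'(W_\ell)$ yields $V_\ell \le (1+\tfrac{3}{\delta})\,W_\ell = \tfrac{\delta+3}{\delta}\,W_\ell$, i.e. $W_\ell \ge q V_\ell$ with $q=\frac{\delta}{\delta+3}$. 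Note the sanity check: for Nash Welfare $f=\ln$, $g\equiv 1$, $\delta=1$, giving $q=\frac14$ as advertised in the summary.

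The first step is to build a test allocation $\vec{z}$ feasible for the constrained problem. I would set agent $i$'s allocation to $\vec{0}$ (legal since $\vec{0}\in\P_i$ by assumption), let agent $\ell$ grab back toward its original bundle $\vec{x_\ell}$, and reduce the remaining agents minimally to respect unit supply. Since removing agent $i$ frees $y_{ij}$ on each item and agent $\ell$ already holds $y_{\ell j}$, the amount agent $\ell$ must take from third parties on item $j$ is only $\Delta_j = (x_{\ell j} - y_{\ell j} - y_{ij})^+ \le x_{\ell j}$, and $\sum_{m\neq i,\ell} y_{mj} \ge \Delta_j$ guarantees feasibility. Plugging $\vec{z}$ (with $Z_\ell = V_\ell$, $Z_i = 0$, $Z_m \le W_m$ for $m\neq i,\ell$) into~\eqref{eq:opt0} at $\vec{W}$ gives
\[
f'(W_\ell)\,(V_\ell - W_\ell) \;\le\; g(W_i) \;+\; L, \qquad L := \textstyle\sum_{m\neq i,\ell} f'(W_m)\bigl(W_m - Z_m\bigr),
\]
where the $g(W_i)$ term is exactly the cost of zeroing agent $i$. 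Using the KKT/price characterization of the constrained welfarist optimum (item prices $\pi_j$ with $f'(W_m)v_{mj}=\pi_j$ whenever $y_{mj}>0$ for $m\neq i$), the collateral term evaluates cleanly as $L = \sum_j \pi_j \Delta_j$; moreover items held only by agent $i$ contribute $\Delta_j=0$, so only ``cleanly priced'' items enter.

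The final step bounds the two pieces. The $\delta$-scaled hypothesis gives $\min_{x,y} g(y)/g(x)=\delta$, hence $g(W_i) \le \tfrac{1}{\delta} g(W_\ell)$ directly. The remaining work is to show $L \le \tfrac{2}{\delta} g(W_\ell)$: here I would combine the price identity $L=\sum_j \pi_j \Delta_j$ with $\Delta_j \le x_{\ell j}$, the unconstrained optimality $p_j = f'(V_\ell)v_{\ell j} \ge f'(V_m)v_{mj}$ on items agent $\ell$ originally held, and the monotonicity of $g$ to control how the price of each recovered item can inflate relative to its value to $\ell$. Summing yields $f'(W_\ell)(V_\ell-W_\ell)\le \tfrac{3}{\delta}g(W_\ell)$ and the claim follows.

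The main obstacle is precisely this last bound on the collateral loss $L$. Controlling the new prices $\pi_j$ on the items agent $\ell$ must reclaim from \emph{other} hurt agents is delicate: the naive ratio $f'(W_m)/f'(V_m)$ can blow up when a third party $m$ itself loses a lot of value, so the argument must lean on both structural hypotheses together --- the $\delta$-scaled bound to cap price ratios and $g$ non-increasing to relate a hurt agent's spending to its value --- together with the feasibility bookkeeping that keeps only the cleanly priced items in the sum. Nailing the constant exactly at $3$ (rather than some larger absolute constant) is where the care is concentrated; the remainder of the argument is a routine assembly of the gradient-optimality inequality with these two bounds.
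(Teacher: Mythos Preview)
Your approach diverges from the paper's, and the gap you yourself flag is real: bounding the collateral loss $L$ by $\tfrac{2}{\delta}g(W_\ell)$ cannot be closed with the ingredients you list. The obstruction is exactly the circularity you mention. On an item $j$ that $\ell$ must reclaim from a third party $m$ (so $y_{\ell j}=0$, $y_{mj}>0$), the constrained price is $\pi_j=f'(W_m)v_{mj}$, and the only way to relate this to $\ell$ is through the \emph{unconstrained} KKT inequality $f'(V_m)v_{mj}\le f'(V_\ell)v_{\ell j}$, yielding $\pi_j\le \frac{f'(W_m)}{f'(V_m)}\,f'(V_\ell)v_{\ell j}$. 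For Nash welfare this ratio is $V_m/W_m=r_m$, precisely the quantity you are trying to bound for an arbitrary agent. Neither $\delta$-scaling (which bounds ratios of $g$, not of $f'$) nor monotonicity of $g$ controls $f'(W_m)/f'(V_m)$; so using only the single gradient inequality at $\vec W$ cannot isolate $\ell$ without already knowing the result for every other $m$.

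The paper sidesteps this entirely. It applies the gradient inequality \emph{twice}: once at the unconstrained optimum $\vec A$ against the feasible point $\vec B$, and once at the constrained optimum $\vec B$ against the much simpler test allocation that zeros out agent $i$ and gives every other agent her full unconstrained bundle $\vec{x_k}$ (feasible since $\sum_{k\neq i}x_{kj}\le 1$ and $\vec 0\in\P_i$). There is no collateral term at all. Writing $r_k=A_k/B_k$, summing the two inequalities, and invoking the two-term rearrangement inequality
\[
\frac{g(A_k)}{r_k}+g(B_k)\,r_k \;=\; f'(A_k)B_k+f'(B_k)A_k \;\ge\; f'(A_k)A_k+f'(B_k)B_k \;=\; g(A_k)+g(B_k)
\]
(valid because $f'$ is non-increasing) makes every third-party term cancel, leaving $g(A_\ell)/r_\ell+g(B_\ell)r_\ell\le g(A_\ell)+g(B_\ell)+g(A_i)+g(B_i)$. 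Dividing by $g(B_\ell)$ and applying $\delta$-scaling three times gives $r_\ell\le 1+3/\delta$. The missing idea in your sketch is thus the second gradient inequality together with this rearrangement cancellation; your single-inequality route with a hybrid test allocation does not appear to close without it.
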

\begin{proof}
Suppose agent $i$ imposes a diversity constraint. Let $\vec{A}$ denote the values of the agents in the optimal allocation if the constraint is not enforced, and $\vec{B}$ the vector of values in the allocation if the constraint is enforced.  Let $r_k = \frac{A_k}{B_k}$.  Consider an agent $\ell \neq i$. 

Since $\vec{B}$ corresponds to the values in a feasible allocation even without agent $i$'s constraint, by the gradient optimality condition (Eq~(\ref{eq:opt0})), we have:
\begin{equation*}
    \sum_k\left(g(A_k)\frac{1}{r_k} - g(A_k)\right) = \sum_k f'(A_k)(B_k - A_k) \leq 0,
\end{equation*} 
which can be rewritten as:
\begin{equation}
    \label{eq:pf1}
    g(A_\ell)\frac{1}{r_\ell} + \sum_{k \neq \ell}g(A_k)\frac{1}{r_k} \leq \sum_k g(A_k).
\end{equation}

Similarly, suppose we take the allocation without agent $i$'s constraint and remove agent $i$'s allocation from it, the resulting allocation is feasible for the problem where agent $i$ has a constraint. This is because the empty allocation is feasible for agent $i$'s constraints. Applying Eq~(\ref{eq:opt0}) again, we have:
\begin{equation*}
  -g(B_i) + \sum_{k \neq i}\left(g(B_k)r_k - g(B_k)\right)  
   =   f'(B_i)(0 - B_i) + \sum_{k \neq i}f'(B_k)(A_k - B_k) \leq 0,
 \end{equation*}
which can be rewritten as:
\begin{equation}
    \label{eq:pf2}
    g(B_\ell)r_\ell + \sum_{k \neq i, \ell}g(B_k)r_k \leq \sum_k g(B_k).
\end{equation}

Since $g$ is non-increasing, so is $g(x)/x$. Therefore, by the Rearrangement inequality, we have for all $k \neq i, \ell$:
\begin{equation}
    \label{eq:rearr}
 g(A_k)\frac{1}{r_k} + g(B_k)r_k = \frac{g(A_k)}{A_k}B_k + \frac{g(B_k)}{B_k} A_k 
 \geq \frac{g(A_k)}{A_k}A_k + \frac{g(B_k)}{B_k}B_k =  g(A_k) + g(B_k).
\end{equation}

We now have the following, where the first inequality follows from Eq~(\ref{eq:rearr}),and the final inequality follows by adding Equations~(\ref{eq:pf1}) and~(\ref{eq:pf2}):
\begin{align*}
    & g(A_\ell)\frac{1}{r_\ell} + g(B_\ell)r_\ell + \sum_{k \neq i, \ell}\left( g(A_k)+ g(B_k)\right) \\
    \leq &\left(g(A_\ell)\frac{1}{r_\ell} + \sum_{k \neq \ell}g(A_k)\frac{1}{r_k}\right) + \left(g(B_\ell)r_\ell + \sum_{k \neq i, \ell}g(B_k)r_k \right)\\
    \leq & \sum_k(g(A_k) + g(B_k)).
\end{align*}

Simplifying, this further gives:
$$g(A_\ell)\frac{1}{r_\ell} + g(B_\ell)r_\ell \leq g(A_\ell) + g(B_\ell) + g(A_i) + g(B_i).$$

Dividing both sides by $g(B_\ell)$ gives:
$$r_\ell \leq  1 + \frac{g(A_\ell)}{g(B_\ell)} + \frac{g(A_i)}{g(B_\ell)} + \frac{g(B_i)}{g(B_\ell)} \leq 1 + \frac{3}{\delta}.$$
i.e. $\frac{A_\ell}{B_\ell} = r_\ell \leq 1 + \frac{3}{\delta}$. Therefore, $A_{\ell} \le B_{\ell}\left(1 + \frac{3}{\delta} \right)$, showing $q$-{\sc nne} for $q \ge  \frac{\delta}{\delta + 3}$.
\end{proof}

 
The next corollary follows by directly applying the above theorem on NW rule. The proof follows by observing that for NW, $f(y) = \ln y$, so that $g(y) = 1$ is a constant, so that it is $\delta$-scaled for $\delta = 1$.

\begin{corollary}
\label{cor0}
The NW allocation rule satisfies $q$-{\sc nne} for $q = \frac{1}{4}$.
\end{corollary}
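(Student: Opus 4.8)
The plan is to invoke Theorem~\ref{thm:pf1} directly, so the main task is simply verifying that the Nash Welfare rule satisfies both hypotheses of that theorem. For NW we have $f(y) = \ln y$, which is the $\gamma \to 0$ limit of the $\gamma$-Fair family. First I would compute $f'(y) = 1/y$ and hence $g(y) = y f'(y) = y \cdot \frac{1}{y} = 1$, a constant function. I would then check the two required properties in turn. Since $g$ is constant, it is trivially non-increasing, so the first hypothesis holds. For the second hypothesis, I observe that $g$ is continuous and $\min_{x,y \ge 0} \frac{g(y)}{g(x)} = \frac{1}{1} = 1$, so $f$ is $\delta$-scaled with $\delta = 1$.

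Having established $\delta = 1$, the conclusion is immediate: Theorem~\ref{thm:pf1} guarantees $q$-{\sc nne} for $q = \frac{\delta}{\delta+3}$, which evaluates to $\frac{1}{1+3} = \frac{1}{4}$. I would state this substitution explicitly as the final line of the argument.

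There is really no serious obstacle here, since this is a corollary obtained by plugging a single value into an already-proven bound. The only point requiring mild care is the bookkeeping around the domain of $g$: because $g(y) = 1$ identically, the ratio $g(y)/g(x)$ equals $1$ everywhere and the minimum is attained trivially, so the $\delta$-scaled condition is satisfied without any edge-case subtlety. I would also note in passing that this is precisely the feature distinguishing NW from the other $\gamma$-Fair rules of Corollary~\ref{cor1}: for $\gamma \neq 0$ the function $g(y) = y^{\gamma}$ is unbounded in one direction and forces $\delta \to 0$, whereas the scale-invariance of $\ln y$ makes $g$ constant and thus yields a genuine positive constant $\delta = 1$. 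Hence the whole proof amounts to the computation $g \equiv 1 \Rightarrow \delta = 1 \Rightarrow q = \frac{1}{4}$.
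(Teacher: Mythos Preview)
Your proposal is correct and matches the paper's own argument essentially verbatim: the paper also just observes that for NW, $f(y)=\ln y$ gives $g(y)=1$, hence $\delta=1$, and then plugs into Theorem~\ref{thm:pf1} to get $q=\tfrac{1}{4}$. There is nothing to add or correct.
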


We note that there could be other functions $f$ satisfying the preconditions of Theorem~\ref{thm:pf1}. For instance, $f(x) = 2\ln x - \ln(1+x)$, which is a combination of Nash Welfare and Smooth Nash Welfare, is $ \delta$-scaled for $\delta = \frac{1}{2}$, so that it is $q$-{\sc nne} for $q = \frac{1}{7}$. 

To complete the picture, we now show that no rule that satisfies Pareto-optimality (PO) and the Pigou-Dalton principle (PD) as defined in Section~\ref{sec:welfare} can achieve $q$-{\sc nne} for $q > 1/2$. 

\begin{theorem}
\label{thm:pf2.5}
No allocation rule that satisfies (PO) and (PD)  satisfies $q$-{\sc nne} for any constant $q > \frac{1}{2}$. 
\end{theorem}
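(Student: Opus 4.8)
The plan is to produce a single two-agent instance---precisely the special case flagged in Section~\ref{sec:constraints}, where two agents are initially unconstrained and one of them switches to a single proportionality constraint---in which \emph{every} rule obeying (PO) and (PD) is forced to halve the value of the unconstrained agent. Take two items $a,b$ and two agents with $v_{1a}=v_{2b}=1$ and $v_{1b}=v_{2a}=0$. Both agents are initially unconstrained, and agent $1$ will later impose the proportionality constraint $x_{1a}=x_{1b}$, i.e.\ the partition $\{a\},\{b\}$ with equal proportions $\tfrac12,\tfrac12$; this lies in the allowed class and satisfies $\vec 0\in\P_1$.

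First I would pin down the pre-constraint allocation using only (PO). Since every feasible $\vec x$ has $V_1=x_{1a}\le 1$ and $V_2=x_{2b}\le 1$, and the vector $(1,1)$ is attained by $x_{1a}=x_{2b}=1$ (the two agents desire disjoint items), the point $(1,1)$ Pareto-dominates every other feasible value vector. Hence any (PO) rule must output it, giving $V_2(\vec x)=1$. This step is immune to any tie-breaking subtlety because $(1,1)$ is the unique undominated point.

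Next I would analyze the post-constraint instance. Writing $x_{1a}=x_{1b}=t$, agent $1$ is forced to consume the worthless item $b$, so $V_1=t$, and since agent $2$ absorbs the remaining share of $b$ we get $V_2=1-t$. Thus every Pareto-optimal allocation has value vector $(t,1-t)$ with the \emph{constant} total $V_1+V_2=1$. Along this constant-sum segment (PD) compares all pairs, and the egalitarian split $(\tfrac12,\tfrac12)$ is the unique value vector that is not Pigou--Dalton--dominated: any $(t,1-t)$ with $t\neq\tfrac12$ arises from $(\tfrac12,\tfrac12)$ by a strict regressive transfer, so the rule must prefer $(\tfrac12,\tfrac12)$ to it. Consequently the selected allocation has $V_2(\vec y)=\tfrac12$, i.e.\ the unconstrained agent's value drops from $1$ to $\tfrac12$. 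For any constant $q>\tfrac12$ this gives $V_2(\vec y)=\tfrac12<q=q\,V_2(\vec x)$, so the rule is not $q$-{\sc nne}, as claimed.

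The main obstacle is the final forcing step. As stated, (PD) expresses only a \emph{weak} preference, so a pathological rule could in principle remain indifferent along the constant-sum frontier and retain $V_2>\tfrac12$; note that (PO) alone cannot help here, since the asymmetric endpoint $(0,1)$ is itself Pareto-optimal. I would close the gap by invoking (PD) in its standard form, where a \emph{strict} regressive transfer is strictly dispreferred: this makes $(\tfrac12,\tfrac12)$ the \emph{unique} maximal element of the induced preference on the frontier, so any rule respecting the principle---equivalently, any rule that never selects a PD-dominated allocation when a dominating feasible one exists---must return it. The rest of the argument is immediate from the explicit description of the frontier, so essentially all the content lies in arranging the constant-sum frontier and in this uniqueness observation.
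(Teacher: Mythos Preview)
Your proof is correct and follows essentially the same approach as the paper: a two-agent, two-item instance where one agent's proportionality constraint forces her to consume an item the other agent needs, and (PD) along a constant-sum segment pins the post-constraint allocation at the egalitarian split. Your instance is in fact slightly cleaner than the paper's---the paper takes $v_{1a}=v_{2a}=v_{2b}=1$ and the constraint $\epsilon x_{2a}=x_{2b}$, then sends $\epsilon\to 0$, whereas your disjoint-values example needs no limit and lets (PO) alone determine the unconstrained allocation; the weak-vs-strict (PD) issue you flag is present identically in the paper's argument and is handled there implicitly in the same way you propose.
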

\begin{proof} 
Consider the following example: There are $2$ items $a,b$ and two agents $1,2$. The values are $v_{1a} = v_{2a} = v_{2b} =  1$, and the rest of the values are zero. Agent $2$ enforces the proportionality constraint $\epsilon x_{2a} = x_{2b}$.  First consider the setting with no proportionality constraint. Suppose a rule allocates $x_{1a} = 1-x$, so that her value is $V_1 = 1 - x$. Then $x_{2a} = x$ and $x_{2b} = 1$ so that the value of agent $2$ is $V_2 = 1 + x$. Since the allocation rule satisfies (PD), this forces $x = 0$, so that $V_1 = 1$. Now suppose agent $2$ enforces the proportionality constraint and as before, let $x_{2a} = x$. This forces $x_{2b} = \epsilon x$, so that $V_2 = (1+\epsilon) x$. As before $V_1 = 1-x$. If $x < \frac{1}{2+\epsilon}$, this allocation cannot satisfy (PD). Therefore $V_1 \le \frac{1+\epsilon}{2+\epsilon}$ for any allocation satisfying (PO) and (PD). Now taking $\epsilon \rightarrow 0$ shows that the allocation cannot be $q$-{\sc nne} for any constant $q > \frac{1}{2}$. 
\end{proof}

\paragraph{    Observations and Extensions.} Theorem~\ref{thm:pf1} does not require $V_i(\vec{x_i})$ to be a linear function of $x_i$. It holds as long as $V_i(\vec{x_i})$ is any concave non-decreasing function as long as $V_i(\vec{0}) = 0$. The same holds for all the positive results we subsequently present, e.g., Theorem~\ref{thm:gen_ub1}. 

As an extension of Theorem~\ref{thm:pf1}, suppose $k$ agents express diversity constraints. A naive application yields $q$-{\sc nne} for $q = \frac{\delta^k}{( \delta + 3)^k}$, since each of the $k$ agents expressing a diversity constraint decreases the value of another agent by a factor of at most $\frac{\delta}{\delta + 3}$. However, we show that as long as a small subset of agents expresses diversity constraints, the externality for $\delta$-scaled functions is bounded. 

\begin{corollary}
\label{cor2}
For any $k$, suppose a set $S$ of $k$ agents switch from being unconstrained to expressing general diversity constraints. 
Then, allocation rules where $g$ is non-increasing and $f$ is $\delta$-scaled satisfy $q$-{\sc nne} for $q = \frac{\delta}{2k + \delta + 1}$.

\end{corollary}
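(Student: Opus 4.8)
The plan is to follow the proof of Theorem~\ref{thm:pf1} almost verbatim, but to apply the two gradient optimality conditions to the \emph{entire} set $S$ simultaneously rather than invoking the single-agent guarantee $k$ times; the latter is exactly the naive compounding that produces the unwanted $\left(\frac{\delta}{\delta+3}\right)^k$ factor. As before, let $\vec{A}$ denote the agent values in the optimal allocation when all agents in $S$ are unconstrained, let $\vec{B}$ denote the values when every agent in $S$ enforces its constraint, and set $r_k = A_k/B_k$. Fix a target agent $\ell\notin S$; the goal is to upper bound $r_\ell$.

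First I would record the two gradient inequalities. Since $\vec{B}$ comes from an allocation that remains feasible once the constraints of $S$ are dropped, applying the optimality condition Eq~(\ref{eq:opt0}) at the unconstrained optimum $\vec{A}$ reproduces Eq~(\ref{eq:pf1}): $\sum_k g(A_k)\frac{1}{r_k} \le \sum_k g(A_k)$. For the second inequality I would take the unconstrained optimal allocation and zero out the allocation of \emph{every} agent in $S$ at once; because $\vec{0}\in\P_i$ for each such agent, the result is feasible for the constrained problem (and freeing items only relaxes the packing constraints). Applying Eq~(\ref{eq:opt0}) at $\vec{B}$ then yields the analogue of Eq~(\ref{eq:pf2}),
\[ \sum_{k\notin S} g(B_k)\, r_k \le \sum_k g(B_k), \]
where the $\sum_{k\in S} g(B_k)$ terms from the removed allocations land on the right-hand side.

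The combination step determines the constant. Since $g$ is non-increasing, so is $g(x)/x$, and the Rearrangement inequality Eq~(\ref{eq:rearr}) gives $g(A_k)\frac{1}{r_k} + g(B_k)\, r_k \ge g(A_k)+g(B_k)$ for every $k\notin S\cup\{\ell\}$. Adding the two gradient inequalities, substituting this bound for all such $k$, and cancelling the common $\sum_{k\notin S\cup\{\ell\}}(g(A_k)+g(B_k))$ leaves
\[ g(A_\ell)\tfrac{1}{r_\ell} + g(B_\ell)\, r_\ell + \sum_{k\in S} g(A_k)\tfrac{1}{r_k} \le g(A_\ell)+g(B_\ell) + \sum_{k\in S}\left(g(A_k)+g(B_k)\right). \]
Dropping the nonnegative terms $g(A_\ell)/r_\ell$ and $\sum_{k\in S} g(A_k)/r_k$ from the left and dividing by $g(B_\ell)$ gives $r_\ell \le 1 + \frac{g(A_\ell)}{g(B_\ell)} + \sum_{k\in S}\frac{g(A_k)+g(B_k)}{g(B_\ell)}$. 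Because $f$ is $\delta$-scaled, every ratio $g(\cdot)/g(B_\ell)$ is at most $1/\delta$, so the $g(A_\ell)$ term contributes $1/\delta$ and each of the $k$ agents of $S$ contributes $2/\delta$, yielding $r_\ell \le 1 + \frac{2k+1}{\delta} = \frac{2k+\delta+1}{\delta}$. Hence $B_\ell \ge \frac{\delta}{2k+\delta+1}A_\ell$, which is the claimed $q$-{\sc nne} bound.

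The only genuine departure from Theorem~\ref{thm:pf1} is the bookkeeping in the second gradient condition: removing all $k$ allocations of $S$ simultaneously pushes their $g(B_k)$ terms to the right-hand side, where, together with the $g(A_k)$ terms surviving on the left, they are charged through the $\delta$-scaled bound instead of being compounded multiplicatively. I expect this charging to be the only subtle point---one must verify that each agent of $S$ contributes exactly two $1/\delta$ terms---and as a sanity check, setting $k=1$ recovers $r_\ell \le 1 + 3/\delta$ and hence $q = \frac{\delta}{\delta+3}$ from Theorem~\ref{thm:pf1}.
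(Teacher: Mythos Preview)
Your proposal is correct and follows the paper's proof essentially verbatim: the paper likewise zeros out all agents of $S$ simultaneously in the second gradient condition, applies the Rearrangement inequality for $k\notin S\cup\{\ell\}$, and arrives at the same inequality $r_\ell \le 1 + \frac{2k+1}{\delta}$ after dividing by $g(B_\ell)$. Your write-up is in fact slightly more explicit than the paper's about the extra nonnegative term $\sum_{k\in S} g(A_k)/r_k$ that gets dropped from the left-hand side.
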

\begin{proof}
Let $\ell \notin S$ denote the agent whose value we are bounding. The inequalities obtained by generalizing Eq~(\ref{eq:pf1}) and~(\ref{eq:pf2}) to omit the set $S$ instead of a single agent $i$ now yields:
\begin{equation*}
g(A_\ell)\frac{1}{r_\ell} + \sum_{k \neq \ell}g(A_k)\frac{1}{r_k} 
\leq  \sum_k g(A_k), \quad g(B_\ell)r_\ell + \sum_{k \notin S \cup \{\ell\}}g(B_k)r_k 
 \leq  \sum_k g(B_k).
 \end{equation*}

The same line of reasoning gives:
$$g(A_\ell)\frac{1}{r_\ell} + g(B_\ell)r_\ell \leq \sum_{k \in S \cup \{\ell\}}\left(g(A_k) + g(B_k)\right).$$

Dividing both sides by $g(B_\ell)$ gives:
$$r_\ell \leq 1 + \frac{g(A_\ell)}{g(B_\ell)} + \sum_{k \in S}\left(\frac{g(A_k)}{g(B_\ell)} + \frac{g(B_k)}{g(B_\ell)}\right) \leq 1 + \frac{2k + 1}{\delta},$$
thus showing $q$-{\sc nne} for $q \geq \frac{\delta}{2k + \delta + 1}$.
\end{proof}

Therefore, the NW allocation is $q$-{\sc nne} for $q = \frac{1}{2(k+1)}$. We  show a matching lower bound below. 

\begin{corollary}
\label{cor3}
There are instances where any allocation satisfying (PO), (PD), and anonymity is not $q$-{\sc nne} for $q = \frac{1+\epsilon}{k+1 + \epsilon}$, where $\epsilon > 0$ is any constant when each of $k$ agents switches to expressing a proportionality constraint. 
\end{corollary}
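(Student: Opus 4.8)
The plan is to lift the two-agent construction of Theorem~\ref{thm:pf2.5} to $k$ constrained agents who all compete with a single ``victim'' over one shared item. Take $k+1$ agents: a victim $\ell$ and constrained agents $1,\dots,k$, together with a common item $a$ and private items $b_1,\dots,b_k$. Set $v_{\ell a}=1$ and $v_{ja}=v_{j b_j}=1$ for each $j\in\{1,\dots,k\}$, all other values zero. Each constrained agent $j$ will impose the proportionality constraint forcing item $b_j$ to be an $\frac{\epsilon}{1+\epsilon}$ fraction of its bundle, i.e.\ $x_{j b_j}=\epsilon\,x_{ja}$, so that $j$ trades item $a$ at an effective ``exchange rate'' $1+\epsilon$: its value is $V_j=(1+\epsilon)x_{ja}$.

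First I would pin down the before-allocation. By (PO) item $a$ is fully allocated, so every Pareto-optimal value vector has total $\sum_i V_i=k+1$ (each constrained agent takes its private item worth $1$, and item $a$ contributes $1$ in total at unit exchange rate). Since all Pareto-optimal allocations share this common total, (PD) directly orders them and picks the unique most equal one, $V_i=1$ for all $i$, which gives the victim all of item $a$. Hence $V_\ell^{\mathrm{before}}=1$ (and this is clearly (PO), as the victim is at its maximum possible value $1$).

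Next I would analyze the after-allocation. Once the constraints are imposed, (PO) still forces $x_{\ell a}+\sum_j x_{ja}=1$, so the total is $T=V_\ell+(1+\epsilon)\sum_j x_{ja}=(1+\epsilon)-\epsilon V_\ell$, depending only on $V_\ell$; by anonymity the symmetric constrained agents share a common value $w$. I then claim $V_\ell^{\mathrm{after}}\le \frac{1+\epsilon}{k+1+\epsilon}$. Suppose instead the rule outputs $V_\ell=v>\frac{1+\epsilon}{k+1+\epsilon}$. Consider the perfectly balanced vector $\vec{E}$ with every coordinate equal to $T/(k+1)$: it has the same total $T$, a short computation shows it is feasible exactly when $v\ge\frac{1+\epsilon}{k+1+\epsilon}$, and one checks $v>T/(k+1)>w$, so the victim is strictly the richest and $\vec{E}$ is strictly more equal. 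Because $\vec{E}$ is majorized by every feasible value vector of total $T$ (the equal vector is the least spread), a rule obeying (PD) must prefer $\vec{E}$, contradicting that it output the strictly less equal vector. Therefore $V_\ell^{\mathrm{after}}\le\frac{1+\epsilon}{k+1+\epsilon}$, and since $V_\ell^{\mathrm{before}}=1$ the rule cannot be $q$-{\sc nne} for any $q>\frac{1+\epsilon}{k+1+\epsilon}$; letting $\epsilon\to 0$ gives the target $\frac{1}{k+1}$, within a factor of two of the Nash Welfare guarantee from Corollary~\ref{cor2}.

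The main obstacle is the last step's use of (PD). Because each constrained agent trades item $a$ at rate $1+\epsilon\neq 1$, a naive pairwise transfer of item $a$ between the victim and a constrained agent does not preserve the pair's total, so (PD) does not apply verbatim (exactly the subtlety already present, but hidden, in Theorem~\ref{thm:pf2.5}). I would resolve this by working in value space: the feasible value vectors form a convex polytope, and both the rule's output $\vec{B}$ and the balanced $\vec{E}$ lie in it with equal total $T$, so the straight segment between them is feasible and equalizes all coordinates monotonically toward $T/(k+1)$ at constant total. This segment realizes a sequence of feasibility-preserving Pigou--Dalton transfers (each achievable by slightly under-using item $a$), which legitimizes invoking (PD). A secondary subtlety is that (PD) yields only a weak preference; I would upgrade this to the strict conclusion either by observing that a strictly equalizing transfer is strictly preferred, or by infinitesimally perturbing the values so that the (PO)+(PD) optimum is unique.
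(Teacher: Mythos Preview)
Your construction and argument are essentially identical to the paper's own proof: the same $(k+1)$-agent instance with one shared item $a$ and $k$ private items, the same proportionality constraint $x_{j b_j}=\epsilon\,x_{ja}$, and the same two-step use of (PD)+anonymity to pin $V_\ell^{\mathrm{before}}=1$ and $V_\ell^{\mathrm{after}}\le\frac{1+\epsilon}{k+1+\epsilon}$. Your explicit treatment of the exchange-rate and weak-preference subtleties is in fact more careful than the paper's, which simply asserts ``(PD) implies $x=0$'' and ``any allocation with $y<\frac{1}{k+1+\epsilon}$ does not satisfy (PD)'' without spelling out the value-space transfer.
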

\begin{proof}
The lower bound follows by extending Theorem~\ref{thm:pf2.5}. There is an item $a$ such that for agent $1$, $v_{1a} = 1$. For $i \in \{2,3,\ldots, k+1\}$, there is an item $i$ such that agent $i$ has $v_{ii} = 1$. For $i \in \{2,3,\ldots, k+1\}$, we also have $v_{ia} = 1$. All other values are zero. Without the diversity constraint, suppose $x_{2a} = x_{3a} = \cdots  = x$ and $x_{1a} = 1- k x$, then $V_2 = V_3 = \cdots = 1+x $ and $V_1 = 1 - kx$. Then (PD) implies $x = 0$ so that $V_1 = 1$.

Now suppose each agent $i \in \{2,3,\ldots, k+1\}$ express the proportionality constraint $\epsilon x_{ia} = x_{ii}$. If $x_{1a} = 1 - k y$, then by anonymity, we have $x_{ia} = y$ and $x_{ii} = \epsilon y$ for all $i \in \{2,3,\ldots, k+1\}$. Therefore, $V_1 = 1 - ky$ and $V_2 = V_3 = \cdots = (1+\epsilon) y$.  Now, any allocation with $y < \frac{1}{k+1 + \epsilon}$ does not satisfy (PD). Therefore, any allocation satisfying (PO), anonymity, and (PD) has $V_1 \le \frac{1+\epsilon}{k+1+\epsilon}$. This completes the proof.
\end{proof}



\section{The Monotonicity ({\sc mon}) Condition}
\label{sec:mon}
Unlike the case of {\sc nne} where almost all rules other than NW fail, for monotonicity, we show that increasing fairness or concavity of $f$ is correlated with achieving $p$-{\sc mon} for constant $p > 0$.  This is surprising since one would expect that if a diversity constraint hurts another agent by an arbitrary amount, it should help the agent enforcing the constraint also by an arbitrary amount. However, we show that this intuition is false and {\sc mon} and {\sc nne} are indeed very different properties.
 
For $\gamma$-Fairness, we now show nearly matching upper and lower bounds on $p$-{\sc mon} as a function of $\gamma$, and show that $p$ decreases with increasing $\gamma$. 




\begin{theorem}
\label{thm:gen_ub1}
For $\gamma \in (-\infty,1]$, the $\gamma$-Fairness rule satisfies $p$-{\sc mon} where $p$ is the solution to $p^{1-\gamma} + p =  1$. Further, this bound on $p$ is tight. 
\end{theorem}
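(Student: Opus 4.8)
The plan is to prove the lower bound $A_i \ge p\,B_i$ on the switching agent's own value directly from gradient optimality, and then to certify tightness with an explicit growing family of instances. I follow the notation of Theorem~\ref{thm:pf1}: let $\vec A$ be the vector of agent values in the optimal allocation \emph{before} agent $i$ adds its constraint, $\vec B$ the values \emph{after}, and $r = A_i/B_i$. Monotonicity is trivial when $A_i \ge B_i$, so I assume $B_i > A_i$, whence $r\in(0,1)$ and the goal is $r \ge p$; recall that for $\gamma$-Fairness $f'(y)=y^{\gamma-1}$ and $g(y)=yf'(y)=y^{\gamma}$. First I would write the two instances of Eq~(\ref{eq:opt0}) used in Theorem~\ref{thm:pf1}, but \emph{without} normalizing by $g(B_\ell)$. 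Testing the unconstrained optimum $\vec A$ against $\vec B$ (feasible, since a constraint only shrinks the feasible set) gives $\sum_k f'(A_k)(B_k-A_k)\le 0$, i.e.
\begin{equation*}
f'(A_i)(B_i - A_i) \le \sum_{k \ne i} f'(A_k)(A_k - B_k).
\end{equation*}
Testing the constrained optimum $\vec B$ against the point obtained from $\vec A$ by zeroing agent $i$'s allocation (feasible because $\vec 0 \in \P_i$) gives $-f'(B_i)B_i + \sum_{k\ne i} f'(B_k)(A_k-B_k)\le 0$, i.e. $\sum_{k \ne i} f'(B_k)(A_k - B_k) \le f'(B_i)B_i = B_i^{\gamma}$.

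The key step links the two right-hand sides. Since $f'$ is non-increasing, for every $k$ one has $f'(A_k)(A_k-B_k)\le f'(B_k)(A_k-B_k)$ regardless of the sign of $A_k-B_k$ (if $A_k\ge B_k$ the smaller derivative multiplies a non-negative number; if $A_k<B_k$ the larger derivative multiplies a negative number). Summing over $k\ne i$ and chaining the two displays gives
\begin{equation*}
f'(A_i)(B_i - A_i) \le \sum_{k\ne i} f'(A_k)(A_k - B_k) \le \sum_{k \ne i} f'(B_k)(A_k - B_k) \le B_i^{\gamma}.
\end{equation*}
Substituting $f'(A_i)=A_i^{\gamma-1}$ and dividing by $B_i^{\gamma}$ turns this into $r^{\gamma-1}-r^{\gamma}\le 1$. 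I finish by observing that $\phi(r)=r^{\gamma-1}-r^{\gamma}=r^{\gamma-1}(1-r)$ has $\phi'(r)=r^{\gamma-2}[(\gamma-1)-\gamma r]<0$ on $(0,1)$; for $\gamma<1$ it decreases from $+\infty$ to $0$, so $\phi(r)\le 1$ forces $r\ge p$ with $\phi(p)=1$, and multiplying $p^{\gamma-1}-p^{\gamma}=1$ by $p^{1-\gamma}$ gives exactly $p^{1-\gamma}+p=1$; for $\gamma=1$ the bound reads $1-r\le 1$, giving $p=0$. Note the argument uses only $V_i(\vec 0)=0$ and convexity of the value space, so it extends to concave valuations and arbitrary convex $\P_i\ni\vec 0$.

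For tightness I would exhibit, for each $n$, an $n$-agent instance on which $\gamma$-Fairness attains $r$ arbitrarily close to $p$ as $n\to\infty$. The construction uses one item $a$ valued by every agent, a private item $b$ owned only by the switching agent $1$, with $v_{1b}$ tuned to the regime where the unconstrained rule pushes agent $1$ \emph{entirely off} $a$ (it holds only $b$); the constraint is a single proportionality constraint on $\{a\},\{b\}$ whose ratio is deliberately mismatched to agent $1$'s value ratio, forcing agent $1$ to reclaim a $\Theta(1/n)$ share of $a$. Concretely for $\gamma=0$ (NW), taking $v_{1b}=\tfrac{1}{n-1}$ and the constraint $x_{1b}=n\,x_{1a}$ yields $A_1=\tfrac{1}{n-1}$ and $B_1=\tfrac1n+\tfrac1{n-1}$, so $r=\tfrac{n}{2n-1}\to\tfrac12=p$; the general-$\gamma$ version rebalances the split of $a$ according to $f$ and tunes $v_{1b}$ analogously. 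Using many agents is essential: agent $1$'s gain is financed by an $O(1/n^2)$ loss to each of the $n-1$ others, so every $r_k\to1$ and the per-term slack in the rearrangement inequality vanishes.

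I expect this tightness construction to be the main obstacle, for two reasons that simpler attempts expose. First, the bound is genuinely loose for few agents: the same gradient inequalities specialized to $n$ agents give only $r\ge n/(2n-1)$ for NW, so no two- or three-agent example can approach $\tfrac12$, and one is forced into a growing family. Second, the ``natural'' proportionality constraint whose ratio matches the agent's own value ratio leaves the switcher's value \emph{exactly} unchanged, so a mismatched ratio is needed to produce any increase; pinning down the $v_{1b}$ and ratio that simultaneously saturate both gradient inequalities is the delicate part. The upper bound, by contrast, is essentially mechanical once the two instances of gradient optimality are combined through the monotonicity of $f'$.
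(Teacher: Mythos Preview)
Your upper-bound argument is essentially the paper's: the same two gradient-optimality tests (the constrained allocation against the unconstrained optimum, and the zeroed-out unconstrained allocation against the constrained optimum), combined through the same termwise inequality $f'(A_k)(A_k-B_k)\le f'(B_k)(A_k-B_k)$ coming from the monotonicity of $f'$. The paper adds the two inequalities and then cancels the $k\ne i$ terms via a rearrangement-type step, obtaining $x-x^{\gamma}\le1$ with $x=B_i/A_i$; you instead chain them to get $r^{\gamma-1}-r^{\gamma}\le1$ with $r=A_i/B_i$. These solve to the same $p$, and your version has the mild advantage of treating $\gamma=0$ uniformly (the paper redoes NW separately, though its argument in fact goes through at $\gamma=0$ as well).

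For tightness you take a genuinely different and simpler route: two items (a shared $a$ and a private $b$) instead of the paper's $2n+1$ items with private ``anchor'' items for every non-switching agent. Your NW calculation is correct. For general $\gamma$, however, your description (``tunes $v_{1b}$ analogously'') is not quite what is needed: keeping $v_{1b}=1/(n-1)$ works for every $\gamma$, and it is the \emph{constraint ratio} that must be tuned, specifically $c=(n-1)\,p^{\gamma}$ (equivalently $c/(n-1)\to p/(1-p)$). With this choice one checks that the constrained interior optimum has $t^*=1/(1+\mu(n-1))$ with $\mu=(1+c v_{1b})^{\gamma/(\gamma-1)}$, that $\kappa:=c v_{1b}$ satisfies $\kappa=\mu$, and hence $r=\kappa/(1+\kappa)=p$ in the limit; your NW choice $c=n$ is just the $\gamma=0$ instance of this. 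So your two-item family does certify tightness for all $\gamma$, but the parameter you flagged as the one to tune is the wrong one, and the general-$\gamma$ computation is not yet on the page. The paper's heavier construction avoids this delicacy by placing all agents' values near the same scale $c=\beta n$, which makes the first-order conditions easier to solve explicitly in $\gamma$.
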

\begin{proof}
We first present the upper bound. The allocation maximizes $\sum_i \frac{1}{\gamma} V_i^{\gamma}$. As in the proof of Theorem~\ref{thm:pf1}, let $\vec{A}$ denote the vector of values of the agents if agent $i$ did not have a diversity constraint, and let $\vec{B}$ denote the vector of values if the constraint is enforced. Assume by scaling all values by the same amount that $A_i = 1$, and denote $B_i = x$. Our goal is to upper bound $x$, which will yield the value of $p$ as $1/x$. We will assume below that $\gamma \neq 0$, and present the proof for $\gamma \rightarrow 0$ separately.  

 The gradient optimality condition Equation~(\ref{eq:opt0}) now simplifies to:
\begin{equation}
    \label{eq:opt2} 
    \sum_i V_i (V^*_i)^{\gamma-1} \le \sum_i (V^*_i)^{\gamma}.
\end{equation}
Consider the unconstrained allocation, but set agent $i$'s allocation to zero. This is feasible for the constrained version whose optimal solution is $\vec{B}$. Applying Eq~(\ref{eq:opt2}), we have:
\begin{equation}
    \label{eq:opt4} x^{\gamma} + \sum_{k \neq i} B_k^{\gamma} \ge \sum_{k \neq i} \frac{A_k}{B_k^{1-\gamma}}.
\end{equation}

Similarly, the constrained allocation $\vec{B}$ is feasible for the unconstrained problem. Applying Eq~(\ref{eq:opt2}), we have:

\begin{equation}
    \label{eq:opt3} 1 + \sum_{k \neq i} A_k^{\gamma} \ge x  + \sum_{k \neq i} \frac{B_k}{A_k^{1-\gamma}}.
\end{equation}

Combining Equations~(\ref{eq:opt3}) and~(\ref{eq:opt4}), we have
\begin{equation}
    \label{eq:opt5} 
    x - x^{\gamma} \le 1 + \sum_{k \neq i} \left(A_k^{\gamma} + B_k^{\gamma} - \frac{A_k}{B_k^{1-\gamma}} - \frac{B_k}{A_k^{1-\gamma}} \right).
\end{equation}

We will now show that
$$ A_k^{\gamma} + B_k^{\gamma} - \frac{A_k}{B_k^{1-\gamma}} - \frac{B_k}{A_k^{1-\gamma}} \le 0. $$
This is equivalent to:
$$ \left(\frac{1}{A_k^{1-\gamma}} - \frac{1}{B_k^{1-\gamma}} \right) \left(A_k - B_k \right) \le 0.$$
It is easy to check that for any $\gamma \in (-\infty, 1]$, we have $A_k \ge B_k$ iff $A_k^{1-\gamma} \ge B_k^{1-\gamma}$. This proves the inequality. Plugging it into Eq~(\ref{eq:opt5}), we have
$$ x - x^{\gamma} \le 1.$$
It is easy to check that for $\gamma \in (-\infty, 1]$, $x$ is maximized when $x  = 1 + x^{\gamma}$, completing the proof.

\paragraph{Nash Welfare.} When $\gamma \rightarrow 0$, the above proof does not directly apply. Nevertheless, we can obtain the bound $p = \frac{1}{2}$ as follows. We follow the same outline as the proof of Theorem~\ref{thm:pf1}, but do not fix another agent $\ell$. For NW rule, Eq~(\ref{eq:pf1}) can be rewritten as:
$$ \frac{1}{r_i} + \sum_{k \neq i} \frac{1}{r_k} \le n.$$
and Eq~(\ref{eq:pf2}) implies
$$ \sum_{k \neq i} r_k \le n.$$
Our goal now is to upper bound $\frac{1}{r_i}$. Therefore, we solve:
$$ \min \sum_{k \neq i} \frac{1}{r_k} \qquad \mbox{s.t.} \qquad \sum_{k \neq i} r_k \le n, \ \ r_k \ge 0.$$
This implies $r_k = \frac{n}{n-1}$ in the optimal solution, so that $\sum_{k \neq i} \frac{1}{r_k} \ge \frac{(n-1)^2}{n}$. Therefore
$$ \frac{1}{r_i} \le n - \sum_{k \neq i} \frac{1}{r_k} \le n - \frac{(n-1)^2}{n} \le 2.$$
Therefore, $A_i \ge \frac{1}{2} B_i$, which shows $p$-{\sc mon} for $p \ge \frac{1}{2}$.

\paragraph{Tight Lower Bound.} There are $n+1$ of agents $\{0,1,2,\ldots, n\}$ where $n \rightarrow \infty$ and let $c = \beta n$, where $\beta \in (0,1)$ is a constant to be determined later. There are $2n+1$ items $\{0,1,2,\ldots,2n\}$. We have $v_{00} = c$; for each $i \in \{1,2,\ldots,n\}$ we have $v_{0i} = v_{ii} = 1$; and $v_{i (i+n)} = c-1$. All other values are zero. Agent $0$ expresses the proportionality constraint $x_{00} = x_{01} = \cdots = x_{0n}$.

We always have $x_{i (i+n)} = 1$ for $i \in \{1,2,\ldots,n\}$. In the absence of the diversity constraint, we have $x_{00} = 1$. Suppose $x_{01} = x_{02} = \cdots = x_{0n} = x$ and $x_{11} = x_{22} = \cdots = x_{nn} = 1-x$. The optimal allocation solves
$$ \max\  n \frac{(c-x)^{\gamma}}{\gamma} + \frac{(nx + c)^{\gamma}}{\gamma} \qquad x \in [0,1].$$
It is easy to check that $x = 0$ in this solution, so that $V_0 = nx + c = c = \beta n$.

When agent $0$ adds the diversity constraint, we have $x_{00} = x_{01} = \cdots = x_{0n} = x$ and $x_{11} = x_{22} = \cdots = x_{nn} = 1-x$. Again, the optimization problem is:
$$ \max\  n \frac{(c-x)^{\gamma}}{\gamma} + \frac{((n+c) x)^{\gamma}}{\gamma} \qquad x \in [0,1].$$
This yields 
$$ x = \min \left(1, c \frac{(n+c)^{\frac{\gamma}{1-\gamma}}}{n^{\frac{1}{1-\gamma}} + (n+c)^{\frac{\gamma}{1-\gamma}}}\right).$$
We will constrain $c$ so that 
\begin{equation}
    \label{eq:const}
    (c-1) (n+c)^{\frac{\gamma}{1-\gamma}} \ge n^{\frac{1}{1-\gamma}},
\end{equation}
so that at the optimal solution, we have $x = 1$ implying $V_0 = (n+c)x = n+c$. This will show $p = \frac{c}{n+c} = \frac{\beta}{1+\beta}$.

The constraint Eq~(\ref{eq:const}) can be written as $(c-1)^{1-\gamma} (n+c)^{\gamma} \ge n$. Dividing by $n$ and observing that when $\beta > 0$ and $n \rightarrow \infty$, $\frac{c-1}{n} \rightarrow \beta$, this constraint reduces to:
$$ \beta^{1-\gamma} (1+\beta)^{\gamma} \ge 1.$$
Setting $\theta = \frac{\beta}{1+\beta}$, this implies that this instance is not $p$-{\sc mon} for $p > \theta$, where $\theta$ is constrained by $ \theta^{1-\gamma} + \theta \ge 1$, completing the proof.
\end{proof} 

The above theorem shows that $p$ increases with decreasing $\gamma$, confirming the intuition that achieving $p$-{\sc Mon} is easier as the function becomes more concave. In particular,  we have the following corollary by setting $\gamma$ appropriately:

\begin{corollary}
The following bounds on $p$-{\sc mon} are tight: $p = 0$ for social welfare ($\gamma = 1$); $p = \frac{1}{2}$ for Nash welfare ($\gamma \rightarrow 0$), and $p = 1$ for max-min fairness ($\gamma \rightarrow -\infty$).
\end{corollary}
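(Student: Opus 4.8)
The plan is to derive all three bounds directly from Theorem~\ref{thm:gen_ub1}, which already establishes that the $\gamma$-Fairness rule satisfies $p$-{\sc mon} tightly for $p$ equal to the solution of the equation $p^{1-\gamma} + p = 1$. Thus the entire corollary reduces to substituting the three relevant values (or limits) of $\gamma$ into this equation and solving for $p$; there is no new optimization or construction to perform, since both the upper bound and the matching lower bound are inherited from the theorem.

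First I would handle social welfare, $\gamma = 1$. Here the exponent $1-\gamma$ vanishes, so $p^{1-\gamma} = p^0 = 1$, and the defining equation collapses to $1 + p = 1$, forcing $p = 0$. This matches the intuition (and the earlier remarks in the paper) that the least egalitarian rule is maximally non-monotone. Next I would treat Nash welfare as the limit $\gamma \to 0$, for which the equation becomes $p + p = 1$, i.e.\ $2p = 1$, giving $p = \tfrac{1}{2}$; this is also consistent with the separate direct argument for NW already supplied inside the proof of Theorem~\ref{thm:gen_ub1}.

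The only step requiring a genuine limiting argument is max-min fairness, $\gamma \to -\infty$. For each finite $\gamma$ the solution $p_\gamma$ lies in $(0,1)$, since $p=1$ would give $p^{1-\gamma}+p = 2 > 1$ while small $p$ gives a sum below $1$. As $\gamma \to -\infty$ the exponent $1-\gamma \to +\infty$, so for any fixed $p \in (0,1)$ the term $p^{1-\gamma} \to 0$; hence the equation $p^{1-\gamma} + p = 1$ is satisfied only as $p_\gamma \to 1^{-}$. I would make this precise by noting that $p_\gamma$ is monotone in $\gamma$ and bounded above by $1$, and that $p_\gamma^{1-\gamma} = 1 - p_\gamma$ forces $p_\gamma \to 1$, yielding $p = 1$ in the limit. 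The mild subtlety here—that $p=1$ is an attained limit rather than an exact root of the equation—is the one point I would state carefully, but it is otherwise a routine monotonicity-and-limit argument rather than a real obstacle.
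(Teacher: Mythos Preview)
Your proposal is correct and matches the paper's approach: the paper does not give a separate proof of this corollary but simply states that it follows ``by setting $\gamma$ appropriately'' in Theorem~\ref{thm:gen_ub1}, which is exactly what you do. Your added care in handling the $\gamma \to -\infty$ limit is fine and more detailed than what the paper itself spells out.
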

Therefore, the MMF allocation rule satisfies monotonicity, combined with (PO) and (PD), while Theorem~\ref{thm:pf2.5} rules out a corresponding result for $1$-{\sc nne}.

\section{Empirical Simulation}
\label{sec:simulate}
We now augment our theoretical analysis with a quantitative empirical demonstration of externality and monotonicity achieved by various allocation rules. We consider a budget-constrained valuation function that is widely used in ad allocations~\cite{MSVV,liquid,Devanur}. We generate instances from real-world datasets to mimic bidding data in ad allocation, and empirically compare the externality and monotonicity properties of various allocation rules. We show that even empirically, rules other than Nash Welfare incur significant negative externality, thus providing evidence that our results are not merely a worst-case analysis, but correspond to what may also happen in practice.

\paragraph{Datasets and Setup.} We use two datasets. The UCI Adult dataset~\cite{adult} tabulates census information such as geographic area, job, gender, and race for around 50,000 individuals. The Yahoo A3 dataset~\cite{YahooA3} contains bid information for advertisers for a set of ads.


For the Adult dataset, we consider the two genders as items, and the $14$ job categories as agents. The value of a gender (item) to a job (agent) is set to be linearly and positively correlated with the number of people of that gender working in that job. We take the first 500 data points. 
For the Yahoo dataset, we consider the advertisements as items, and the advertisers as agents. The value of an advertisement to an advertiser is set as its bid on the advertisement.\footnote{In reality, the value is the bid times the CTR. Though our dataset does not have CTR information, we believe our results will extend to that case.} We take the first 10,000,000 data points. We arbitrarily take 20 advertisements, together with 6 advertisers who have bid on most of these advertisements. In both cases, the diversity constraint for an agent is the proportionality constraint that equalizes the allocation across items where this agent has non-zero values. (We exclude items with zero value since the agent is clearly not interested in this item.)

In order to make the valuation function mirror advertising applications, we use the budget-capped valuation function~\cite{MSVV,liquid}, where for each agent $i$, its value is 
\begin{equation}
    \label{eq:budget}
    V_i(\vec{x_i}) = \min\left(B_i, \sum_jv_{ij}x_{ij} \right).
\end{equation}
where $B_i$ is the budget of the agent. This function has also been termed {\em liquid welfare}, since it is an upper bound on the amount of welfare the platform can generate given budget constraints of the advertisers. In our simulation, we draw $B_i$ independently and uniformly from $(0, T)$. For the Adult dataset, we set $T = 20$, and for the Yahoo dataset, we set $T = 10$. These are chosen commensurate with the magnitude of the values of the agents for items.

\begin{figure*}[htbp]
\parbox{.5\linewidth}{
    \includegraphics[width = 0.45\textwidth]{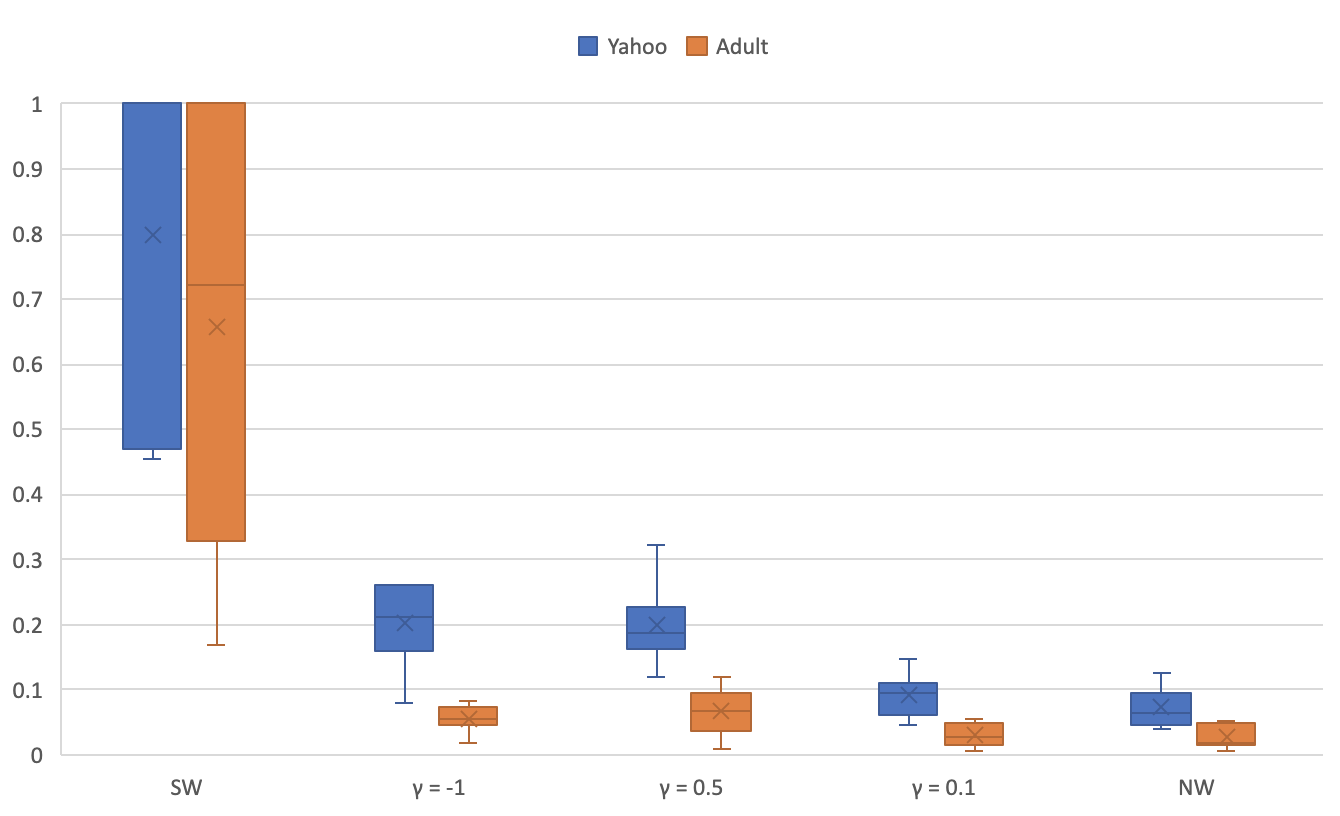}
    \subcaption{ \label{fig:simu}Single Agent simulation.}
}
\parbox{.5\linewidth}{
    \includegraphics[width = 0.45\textwidth]{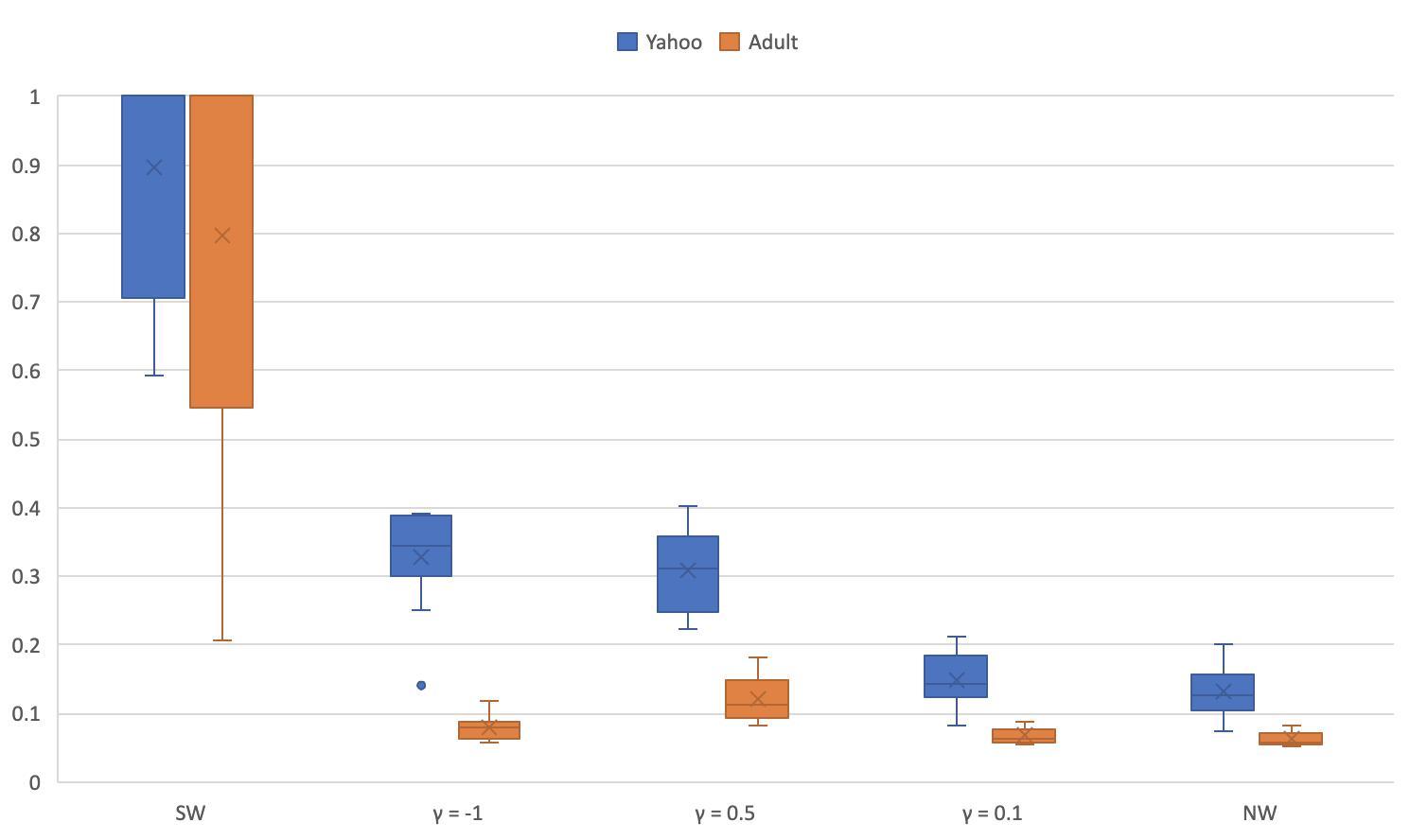}
    \subcaption{    \label{fig:simu2}Double Agent Simulation.}
}
\caption{The plots of $(1 - q_{\min})$ for Social Welfare (SW), Nash Welfare (NW), and $\gamma$-Fairness for $\gamma = 0.5, 0.1, -1$.
}
\label{fig1}
\end{figure*}

\paragraph{Simulation.} We conduct three sets of simulations on each dataset and compare the allocation rules for $\gamma$-Fairness with $\gamma = 1$ (Social Welfare), $\gamma = 0.5$, $\gamma = 0.1$, $\gamma \rightarrow 0$ (Nash Welfare), and $\gamma = -1$ (approximating MMF).

In the first {\em single agent} simulation, a specific agent expresses the diversity constraint, requiring equal allocation of all items with non-zero values. For each allocation rule and each agent $i$, we record the largest $q_i$ such that this rule satisfies $q_i$-{\sc nne} when this agent expresses a diversity constraint. We report $q_{\min} = \min_{i}q_i$ for each rule.  In the second {\em double agent} simulation,  two agents simultaneously express diversity constraints, each requiring equal allocations of all items with non-zero value. As before, we compute the minimum $q$ value achieved over all pairs of agents. 

For both these simulations, to prevent the negative externality from being dominated by tiny values, when computing the negative externality, we ignore agents whose values in the unconstrained case are less than $10\%$ of their budget.\footnote{Including these preserves performance of NW and $\gamma$-Fairness for $\gamma = 0.1$, while other rules only do worse.} We repeat these simulations 10 times with random seeds.  

In the third {\em monotonicity} simulation, we again consider the case where a single agent expresses a diversity constraint, requiring its allocation of all items to be equal. For each allocation rule and agent $i$, we record the largest $p_i$ such that this rule satisfies $p_i$-{\sc mon} when this agent expresses a diversity constraint, and report $p_{\min} = \min_i p_i$ for each rule. 


\paragraph{Results.} In Figures~\ref{fig:simu} and~\ref{fig:simu2}, we present the box plots of $(1 - q_{\min})$ for the single and double agent simulations respectively. For both datasets, Nash Welfare and $\gamma$-Fairness for $\gamma = 0.1$ give the best performance with a loss in value below $10\%$ in most cases, while $\gamma = 0.5$, SW, and $\gamma = -1$ are significantly worse, with negative externality at least twice as large. The exception is the good performance of $\gamma = -1$ on the Adult dataset. 
Moreover, in the double agent simulation, notice that NW and $\gamma$-Fairness for $\gamma = 0.1$ perform much better than the worst-case bounds presented in our paper.


For the monotonicity simulation, we find that $p_{\min} = 1$ for all rules except SW. In other words, these rules are monotone. For SW, we have $p_{\min} = 0.36$ for Yahoo and $p_{\min} = 0.94$ for Adult.\footnote{The performance of SW is unstable under different sets of random budgets, while the other four rules are stable. Thus, for the sake of exposition, we select a bad instance for SW to present here.} 

\paragraph{Relevance to Ad Auctions.} The Social Welfare maximizing rule under the  valuation function from Eq~(\ref{eq:budget}) closely mirrors allocation rules used in ad auctions~\cite{MSVV,Devanur,Feldman1,smooth}. In particular, via standard LP duality, the SW  allocation has the following structure: It is a first price auction where bid (or value) of an advertiser is scaled down by an advertiser-dependent parameter. Such an allocation rule mirrors the widely used smooth delivery allocation rules~\cite{smooth} for budget constrained advertisers. Further, rules that allocate adwords in an online fashion essentially compute an approximately SW allocation~\cite{MSVV,Devanur}. Our empirical results show that such a rule suffers large negative externality when advertisers are allowed to express diversity constraints, and this can be mitigated if the platform instead uses Nash Welfare. This complements a recent line of work in advertising on implementing Nash Welfare~\cite{NW_ads} and regularization~\cite{balseiro2020regularized} as approaches to mitigate unfairness; we show that these approaches mitigate negative impacts of diversity constraints as well. 

\section{Conclusion}
The conceptual message of this paper is that incorporating diversity constraints into an allocation platform requires careful selection of the underlying optimization algorithm to prevent negative externality. One could of course wonder whether negative externality is necessarily a bad thing since it forces other agents to consider diversity themselves. However, our results should be viewed as saying that many allocation rules change values of other agents in an unpredictable or counterintuitive fashion. Since the agents typically are not symmetric in either values or what constraints they desire, the resulting externalities will also be asymmetric. This argues for minimizing such negative externality in the first place instead of using it as a disincentive tool.

Our work is just a first step towards understanding the robustness of allocation rules with diversity constraints, and we now present several open questions. In addition to tightening our bounds, it would be interesting to study rules that incorporate prices, for instance auction mechanisms with quasi-linear utilities and budgets. Though we do not present details,  these allocation rules suffer from drawbacks similar to those presented for welfarist allocation rules. For instance,  truthful auctions~\cite{clinch} or online allocations~\cite{MSVV} with budgets are not Pareto-optimal with diversity constraints, while rules that compute optimal auctions~\cite{myerson,CelisV} or market-clearing solutions~\cite{ArrowD} are not robust to any approximation even with quasi-linear utilities. We leave a deeper examination of rules with prices and budgets as an interesting open question. Finally, it would be interesting to study such externality in contexts other than allocation problems, for instance, discrete ML problems such as ranking or clustering.



\newpage
\bibliographystyle{abbrv}
\bibliography{refs}

\end{document}